\newcommand{\Above}[2]{\stackrel{\scriptstyle #1}{#2}}
\newcommand{\bm}[1]{{\bf #1}}
\newcommand{\0}{\bm{0}}
\newcommand{\A}{\bm{A}}
\newcommand{\B}{\bm{B}}
\newcommand{\C}{\bm{C}}
\newcommand{\D}{\bm{D}}
\newcommand{\I}{\bm{I}}
\newcommand{\Ic}{{\cal I}}
\newcommand{\Lm}{\bm{L}}
\newcommand{\McSolve}{\boldsymbol{{\cal M}}^{\star}}
\newcommand{\M}{\bm{M}}
\newcommand{\Mc}{\boldsymbol{\cal M}}
\newcommand{\Nc}{{\cal N}}
\newcommand{\Pc}{{\cal P}}
\newcommand{\Q}{\bm{Q}}
\newcommand{\RealPart}{\mbox{Re}}
\newcommand{\Sm}{\bm{S}}
\newcommand{\Smt}{\Sm^{\Join}}
\newcommand{\alphah}{\hat{\alpha}}
\newcommand{\bms}[1]{\mbox{\boldmath\(#1\)\unboldmath}}
\newcommand{\benu}{\begin{enumerate}}
\newcommand{\diag}[1]{\mbox{\bf diag}\matrx{#1}}
\newcommand {\dsty}{\displaystyle}
\newcommand{\da}[1]{\frac{d #1}{d \alpha}}
\newcommand{\db}[1]{\frac{d #1}{d \beta}}
\newcommand{\eban}{\begin{eqnarray*}}
\newcommand{\eba}{\begin{eqnarray}}
\newcommand{\eb}{\begin{equation}}
\newcommand{\ee}{\end{equation}}
\newcommand{\eean}{\end{eqnarray*}}
\newcommand{\eea}{\end{eqnarray}}
\newcommand{\eenu}{\end{enumerate}}
\newcommand{\ev}{\bm{e}}
\newcommand{\epsi}{\bms{\epsilon}}
\newcommand{\la}{\!\leftarrow\!}
\newcommand{\matrx}[1]{{\left[ \stackrel{}{#1}\right]}}
\newcommand{\ov}{\overline}
\newcommand{\pv}{\bm{p}}
\newcommand{\tp}{{\top}}	
\newcommand{\vv}{\bm{v}}
\newcommand{\wbh}{\widehat{\wb} } 
\newcommand{\wb}{\ov{w}} 
\newcommand{\wh}{\hat{w}} 
\newcommand{\x}{\bm{x}}
\newcommand{\y}{\bm{y}}
\newcommand{\zh}{\hat{z}}
\newcommand{\zvh}{\hat{\z}}
\newcommand{\z}{\bm{z}}
\newtheorem{Corollary}{Corollary}
\newtheorem{Lemma}{Lemma}
\newtheorem{Theorem}{Theorem}
\begin{document}

\title{The Evolutionary Reduction Principle for \\Linear Variation in Genetic Transmission}
\author{Lee Altenberg
\footnote{Dedicated to my doctoral advisor Marc Feldman on his 65th birthday, and to the memory of Marc's doctoral advisor, Sam Karlin, who both laid the foundations necessary for these results; and to my mother Elizabeth Lee and to the memory of my father, Roger Altenberg, who both laid the foundation necessary for me.}
\\{University of Hawai`i at Manoa}
\thanks{To whom correspondence should be addressed. E-mail: altenber@hawaii.edu}
}

%%%\date{}

\maketitle

\begin{abstract}
The evolution of genetic systems has been analyzed through the use of modifier gene models, in which a neutral gene is posited to control the transmission of other genes under selection.  Analysis of modifier gene models has found the manifestations of an  ``evolutionary reduction principle'':  in a population near equilibrium, a new modifier allele that scales equally all transition probabilities between different genotypes under selection can invade if and only if it reduces the transition probabilities.   Analytical results on the reduction principle have always required some set of constraints for tractability: limitations to one or two selected loci, two alleles per locus, specific selection regimes or weak selection, specific genetic processes being modified, extreme or infinitesimal effects of the modifier allele, or tight linkage between modifier and selected loci.  Here, I prove the reduction principle in the absence of any of these constraints, confirming a twenty-year old conjecture.  The proof is obtained by a wider application of Karlin's Theorem 5.2 (1982) and its extension to ML-matrices, substochastic matrices, and reducible matrices.
\ \\ \ \\
\thanks{Keywords:  evolution; evolutionary theory; modifier gene; recombination rate; mutation rate; spectral analysis; reduction principle; Karlin's theorem; ML-matrix; essentially non-negative matrix.}
\end{abstract}

\section{Introduction}

Darwinian evolution occurs through the interaction of two fundamental processes:  (1) natural selection, i.e. differential survival and reproduction;  and (2) genetic transformation, i.e. change of genetic {\em content} during reproduction, which provides the variation upon which selection can act.  The principal genetic transformations are sexual reproduction, recombination, and mutation, while a growing list of other transformations includes gene conversion, methylation, deletions, duplications, insertions, transpositions, and other chromosomal alterations.

These two processes, augmented by a third --- the randomness of sampling in finite populations --- provide the basis for our causal explanations of the characteristics of organisms.  In its simplest version:  Transformation processes create new genetic states among offspring, and differential survival and reproduction of alternate genetic states results in the prevalence of states with the highest levels of survival and reproduction.  In a more sophisticated version:  evolutionary dynamics are the result of the joint action of selection,  transformation, and random sampling processes that move populations to distribute over certain regions of genotype space.

The simpler version of the explanation runs into a quandary when trying to explain traits that in themselves do not cause different levels of survival or reproduction, in particular, the traits that make up the molecular, cellular, and organismal machinery of the genetic transformation processes themselves.  Variation in the transformation machinery can produce differing distributions of offspring genotypes, without necessarily affecting the parent's survival or their quantity of offspring.  To understand the fate of variation in the genetic transformation machinery requires the more sophisticated version of evolutionary causation, the evolutionary dynamics of the joint action of selection, transformation, and sampling processes.

The earliest mathematical treatments of evolutionary dynamics in the 1920s `Modern Synthesis' of Darwinism and Mendelian genetics  \citep{Fisher:1922,Haldane:1924I,Wright:1931} dealt straightaway with many complex issues;  however, the first analysis of genetic variation in the genetic transformation processes themselves waited another thirty years for Kimura's \citeyearpar{Kimura:1956} analysis of a model of recombination modification.  The model examines the fate of a chromosomal alteration that eliminates recombination between two loci with a stable polymorphism that exhibits linkage disequilibrium.  When the alteration occurs in the chromosome that have above average fitness, it increases in frequency.  This result affirmed Fisher's \citeyearpar[p. 130]{Fisher:1930} assertion that ``the presence of pairs of factors in the same chromosome, the selective advantage of each of which reverses that of the other, will always tend to diminish recombination, and therefore to increase the intensity of linkage in the chromosomes of that species.''  This result was the first instance of what was to later be called the ``reduction principle'' for the evolution of genetic transformations \citep{Feldman:1972,Feldman:Christiansen:and:Brooks:1980,Altenberg:1984,Liberman:and:Feldman:1986:GRP,Altenberg:and:Feldman:1987}.

 \citet{Nei:1967} introduced, and partially analyzed, a model for the evolution of recombination in which recombination rates between two loci are modified by a third, neutral locus.  \citet{Feldman:1972} gave a complete linear stability analysis of the model under the assumption of additive, multiplicative, and symmetric viability selection regimes.  He found that for populations near a polymorphic equilibrium with linkage disequilibrium under selection and recombination, genetic variation for recombination would survive if and only if it reduced the rate of recombination between the loci under selection.  

Other transformation processes were analyzed with modifier gene models, in particular mutation and migration rates, and the reduction principle was found to emerge again \citep{Karlin:and:McGregor:1972:PNAS,Feldman:and:Balkau:1973,Balkau:and:Feldman:1973,Karlin:and:McGregor:1974,Feldman:and:Krakauer:1976,Feldman:Christiansen:and:Brooks:1980}.  (Note that there is more recent acceptance of the idea that spatial location of organisms, and other environmental conditions, may function formally as heritable traits subject to transformation processes, e.g. \citet{Schauber:Goodwin:Jones:and:Ostfeld:2007,Odling-Smee:2007}.  However, this concept follows quite naturally from the concept of generalized transmission that appears in models of cultural transmission and modifier genes (\citealt{Cavalli-Sforza:and:Feldman:1973:MCI,Karlin:and:McGregor:1974}; \citealt[pp. 15-16,  p. 178]{Altenberg:1984}).)

These first reduction results were derived under narrow constraints on the selection regime, number of modifier alleles, or number of alleles under selection.  Subsequent studies of modifiers of these three processes --- recombination, mutation, and migration --- have extended the result to cases of modifier polymorphisms and arbitrary selection coefficients \citep{Liberman:and:Feldman:1986:MMR,Liberman:and:Feldman:1986:GRP,Feldman:and:Liberman:1986,Liberman:and:Feldman:1989}, but are still restricted to two alleles per selected locus, or two demes in the case of migration modification.

The repeated emergence of the reduction result in modifier models of different processes led to a study of mathematical underpinnings that might be common to all of them \citep{Altenberg:1984}.  The approach taken was to represent all possible transmission processes --- in which recombination and mutation comprise special cases --- using a general bi-parental transmission matrix, consisting of probabilities $T(i\la j, k)$ that parental haplotypes $j$ and $k$ produce a gamete haplotype $i$.  Haplotype here refers to a gamete's genotype, or a gamete's contribution to a diploid genotype.  (The addition of spatial subdivisions in the case of migration modification leads to a slightly changed representation \citep[pp. 178-199]{Altenberg:1984}.)

This approach allows the particulars of the processes to be abstracted out of the model, and reveals that what all the models have in common is the nature of the {\em variation} in transmission produced by variation at the modifier locus.  All of the models fit the form
\[
T_\alpha (i \la j, k) = \alpha \; P(i\la j, k) \mbox{\ for } j, k \neq i,
\] 
where $\alpha$ is the modified parameter that represents an overall rate of transformation of haplotypes, over all haplotypes.  This form is referred to as {\it linear variation} \citep{Altenberg:1984,Altenberg:and:Feldman:1987} because the modifier gene scales all transmission probabilities between different haplotypes equally.  

In the perturbation analysis of the evolutionary dynamics of the modifier locus (to be explained in detail in the next section), the stability matrix under linear variation has the form: 
\eb
\label{eq:StabilityMatrix}
\M(\alpha,r)\D =
\left\{(1-\alpha)[(1-r)\I + r\Q] + \alpha[(1-r)\Sm + r \Smt]\right\}\D,
\ee
where $\alpha$ is the transmission parameter produced by the new modifier allele, $\Q$, $\Sm$, and $\Smt$ are stochastic matrices, $\D$ is a positive diagonal matrix, and $r$ is the rate of recombination between the modifier and the nearest locus under selection.  

The rare modifier allele will increase when rare at a geometric rate if and only if the spectral radius $\rho(\M(\alpha,r)\D)$ exceeds 1.  This formulation is extremely general, accommodating modifier models with arbitrary numbers of modifier alleles, numbers of loci and alleles per locus under selection, selection regimes, and transmission processes.

Although the model is extremely general, it is tractable thanks to a theorem of \citet{Karlin:1982}:

\begin{Theorem}[Karlin Theorem 5.2, { \citeyearpar[pp. 194--196]{Karlin:1982}}]
Let $\M$ be an arbitrary non-negative irreducible stochastic matrix.  Consider the family of matrices
 \[
 \M{(\alpha)} = (1-\alpha) \I + \alpha \M
 \]
Then for any diagonal matrix $\D$ with positive terms on the diagonal, the spectral radius
\[
\rho(\alpha) = \rho( \M{(\alpha)} \D )
\]
is decreasing as $\alpha$ increases (strictly provided $\D \neq d \I$).
\end{Theorem}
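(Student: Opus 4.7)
My plan is to use Perron--Frobenius theory and first-order eigenvalue perturbation to reduce the monotonicity claim to an inequality between two weighted averages of the diagonal entries $d_i$, and then exploit the stochasticity of $\M$ to establish that inequality.

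For $\alpha \in (0,1]$, $\M(\alpha) = (1-\alpha)\I + \alpha\M$ is irreducible stochastic, so $\M(\alpha)\D$ is irreducible non-negative. By Perron--Frobenius, $\rho(\alpha) := \rho(\M(\alpha)\D)$ is simple, with strictly positive right and left eigenvectors $\vv$ and $\uv$, and analytic perturbation theory makes $\rho, \uv, \vv$ smooth in $\alpha$. The standard derivative formula gives $\rho'(\alpha) = \uv^\tp(\M-\I)\D\vv/(\uv^\tp\vv)$. Solving the right eigenequation for $\M\D\vv$ and substituting shows $(\M-\I)\D\vv = (\rho\I - \D)\vv/\alpha$, so
\[
\rho'(\alpha) = \frac{\rho - \overline{d}}{\alpha}, \qquad \overline{d} := \frac{\uv^\tp \D \vv}{\uv^\tp \vv} = \frac{\sum_i u_i v_i d_i}{\sum_i u_i v_i}.
\]
The theorem thus reduces to showing $\rho \le \overline{d}$, strictly when $\D \ne d\I$.

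To prove this inequality, exploit the positive stationary distribution $\pi$ of $\M$, which also satisfies $\pi^\tp \M(\alpha) = \pi^\tp$ for every $\alpha$. Premultiplying $\M(\alpha)\D\vv = \rho\vv$ by $\pi^\tp$ yields the exact identity
\[
\rho = \frac{\pi^\tp \D \vv}{\pi^\tp \vv} = \frac{\sum_i \pi_i v_i d_i}{\sum_i \pi_i v_i},
\]
expressing $\rho$ as a different weighted mean of the $d_i$. The inequality $\rho \le \overline{d}$ then is equivalent, after cross-multiplying and symmetrizing in $(i,j)$, to
\[
\sum_{i,j} v_i v_j\, (\pi_i u_j - \pi_j u_i)(d_j - d_i) \ge 0,
\]
a covariance-type inequality between $\psi_i := u_i/\pi_i$ and $d_i$ under the weighting $v_i v_j \pi_i \pi_j$. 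The structural ingredient here is that $\psi$ satisfies the Perron equation $\D \M^*(\alpha)\psi = \rho \psi$, where $\M^*$ is the $\pi$-time-reversal of $\M$ (itself irreducible stochastic); so $\psi$ is the positive right Perron eigenvector of $\D \M^*(\alpha)$, a non-negative matrix whose row sums are precisely the $d_i$.

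The main obstacle is closing this covariance inequality. Pointwise comonotonicity of $\psi$ with the $d_i$ can fail once $n \ge 3$ --- one can construct irreducible stochastic $\M$ and positive diagonal $\D$ with $d_i > d_j$ but $u_i/\pi_i < u_j/\pi_j$ for certain pairs --- so the argument cannot proceed term-by-term. Instead, the aggregate inequality must be proved by coupling the identity above with a dual identity obtained by postmultiplying $\uv^\tp \M(\alpha)\D = \rho \uv^\tp$ by $\D^{-1}\mathbf{1}$ and using $\M(\alpha)\mathbf{1} = \mathbf{1}$, which gives $\rho = \sum_i u_i / \sum_j u_j/d_j$, a weighted harmonic mean. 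Combining these two exact expressions for $\rho$ through a Cauchy--Schwarz argument on the factorizations $\pi_i v_i = \sqrt{\pi_i v_i d_i}\cdot\sqrt{\pi_i v_i/d_i}$ and $u_i v_i = \sqrt{u_i v_i d_i}\cdot\sqrt{u_i v_i/d_i}$ is the route I would pursue. Once the non-strict inequality is in hand, strictness when $\D \ne d\I$ follows from the strict positivity of $\vv, \uv$ (guaranteed by irreducibility) together with the existence of at least one pair $(i,j)$ with $d_i \ne d_j$, which rules out the degeneracy required for equality throughout.
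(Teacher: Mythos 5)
Your opening reduction is sound and in fact mirrors the structure of Karlin's argument as reproduced in the paper: the perturbation formula plus the eigenvalue equation gives $\rho'(\alpha) = \bigl(\rho - \ov{d}\bigr)/\alpha$ with $\ov{d} = \uv^\tp\D\vv/(\uv^\tp\vv)$, which is exactly the analogue of the paper's equation \eqref{eq:dmuCalc}, where the saddle-point weights satisfy $p_i \propto u_i v_i$. But at that point the entire content of the theorem is concentrated in the inequality $\rho \leq \ov{d}$ (strict when $\D \neq d\I$), and you do not prove it. You state explicitly that ``the main obstacle is closing this covariance inequality'' and that a Cauchy--Schwarz argument ``is the route I would pursue.'' That is a genuine gap, not a finished proof: the two exact identities you derive express $\rho$ as a mean of the $d_i$ with weights $\pi_i v_i$ (stationary distribution against the right eigenvector) and as a harmonic mean with weights $u_i$, whereas the target $\ov{d}$ uses the weights $u_i v_i$; there is no evident Cauchy--Schwarz pairing that bridges these three distinct weightings, and you give none. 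You also correctly note that pointwise comonotonicity fails for $n \geq 3$, which is precisely why a term-by-term or naive covariance argument cannot work --- so the proposal ends exactly where the difficulty begins.

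The missing ingredient is a variational characterization of the spectral radius, which is how both Karlin and the paper close this step (Lemma 3, via Friedland's Donsker--Varadhan formula): writing $\rho = \sup_{\pv}\inf_{\x>\0}\sum_i p_i [\M(\alpha)\D\x]_i/x_i$, the supremum is attained at $p_i \propto u_i v_i$, and then choosing the single test vector $\x = \D^{-1}\y$, where $\y$ is the right Perron vector of the stochastic matrix $\M(\alpha)$ (so $\M(\alpha)\D\x = \y$), gives $[\M(\alpha)\D\x]_i/x_i = d_i$ and hence $\rho \leq \sum_i p_i d_i = \ov{d}$ immediately, with strictness when $\D \neq d\I$ from uniqueness of the minimizing $\x$ (this is the step \eqref{eq:KarlinForMLKey} in the paper's extension). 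Without this (or some equivalent convexity/log-majorization input), the identities manipulated from the eigenvector equations alone are circular --- e.g.\ expanding $\uv^\tp(\I-\M(\alpha))\D\vv$ just reproduces $\ov{d}-\rho$ --- so your outline cannot be completed as written.
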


We see that $\M(\alpha,r)\D$ fits the form in Karlin's theorem only if $r=0$, in which case the stability matrix becomes $\left\{(1-\alpha)\I + \alpha \Sm \right\}\D$, and we immediately see that a new modifier can invade a population near equilibrium if and only if it reduces $\alpha$ so as to produce $\rho( \M{(\alpha)} \D ) > 1$ (\citealt[Theorem 3.9, pp. 126-128]{Altenberg:1984}; \citealt[Result 3]{Altenberg:and:Feldman:1987}).  If $r > 0$, one is left to evaluate
\eb
\da{} \rho( [(1-\alpha) \M_1 + \alpha \M_2] \D ), 
\ee
where $\M_1, \M_2 \neq \I$.   A closed-form characterization of stochastic matrices $\M_1$ and $\M_2$ that produce $d \rho / d \alpha < 0$ is not readily obtained.

The case of general $r > 0$ would seem unlikely to reverse the reduction result, since the only action of recombination with the modifier locus is to blend the equilibrium distribution of selected haplotypes with the distribution created by the new modifier allele.  This blending should not alter the tendency of modifier alleles that  reduce the transformation rate to become associated with haplotypes of above-average fitness (and the converse), which is the essence of the dynamics.  The blending would merely lessen the association.  Even free recombination cannot completely eliminate this emergent association:  a modifier allele that produces perfect transmission always invades when introduced, even if unlinked to the selected loci, i.e. $\rho(\M(0, 1/2) \; \D) > 1$ (\citealt[Theorem 3.5, p.118]{Altenberg:1984}; \citealt[Result 2]{Altenberg:and:Feldman:1987}). 

Thus it was conjectured in \citet{Altenberg:and:Feldman:1987} that for arbitrary $r \in [0, 1/2]$, a new modifier allele can invade if and only if it decreases $\alpha$.  This paper proves the conjecture through an extension and expanded use of Karlin's theorem.  The proof requires the extension of Karlin's theorem to essentially non-negative matrices (ML matrices), and to reducible matrices.  The last of the constraints needed to prove the general reduction principle is thus removed.

I preface the result with a self-contained review of the general modifier model developed in  \citet{Altenberg:1984,Altenberg:and:Feldman:1987} and also used in \citet{Zhivotovsky:Feldman:and:Christiansen:1994}.  

%%%%%%%%%%%%%%%%%%%%%%%%%%%%%%%%%%%%%%%%%%%%%%%
\section{The Model}

The evolutionary model examined here fits the general form
\eb
\label{eq:GeneralModel}
\wb \; z_i' = \sum_{jk} {\cal T}(i\la j, k) \; w_{jk} \; z_j \; z_k,
\ee
where $z_i$ is the frequency of haplotype $i$ in the population, $z_i'$ is the frequency in the next generation, ${\cal T}(i \la j,k)$ is the probability that parental haplotypes $j$ and $k$ produce an offspring haplotype $i$, $w_{jk}=w_{kj}$ is the fitness of diploid genotype $jk$, and $\wb=\sum_{jk} w_{jk} \, z_j \, z_k$ is the mean fitness of the population.  The model includes the general assumptions of an infinite population, frequency-independent viability selection, random mating, sex symmetry, no sex linkage, and non-overlapping generations.

The modifier gene model is a special case of \eqref{eq:GeneralModel} in which the genome is structured to contain a group of loci under selection, and a neutral locus external to the group that modifies their genetic transmission probabilities.  The structure is illustrated in Fig. \ref{fig:ModifierModel}.  Haplotypes will now have two indices, one for the allele at the modifier locus ($a, b, c$ etc.), and one for the haplotype of the selected loci ($i, j, k$ etc.).  The modifier allele is assumed to be transmitted perfectly (no mutation nor segregation distortion), so that the only force acting upon it arises from its association with the selected loci.  
Recombination between the modifier locus and the nearest selected locus occurs at rate $r_{ab}$.  

\begin{figure}
\vspace*{.05in} 
\centerline{\includegraphics[width=4in]{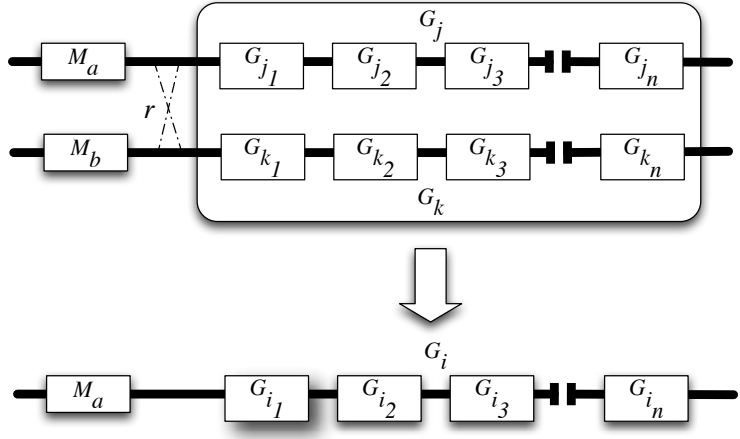}}
\caption{\label{fig:ModifierModel} The configuration of loci in the modifier gene model considered here.   $M_a$ and $M_b$ are the modifier alleles, $G_j$ and $G_k$ are the haplotypes undergoing viability selection, consisting of genes ${G_j}_1 \ldots {G_j}_n$ and ${G_k}_j \ldots {G_k}_n$.  With probabilities under the control of the modifier genotype $M_a/M_b$, haplotype $M_a G_i$ is produced.}
\end{figure}

The situation where the modifier locus is internal to the set of selected loci is not considered because it is no longer possible to separate recombination events with the modifier locus from transformations acting on the selected haplotype.  In order for the variation in  transmission to be linear in this configuration, there would need to be complete interference between recombination events on either side of the modifier locus and any other transformation processes being modeled.   

\subsection{The Modifier Gene Model}

Under this genetic structure, the transmission probabilities take on the form:
\[
T_{(r)}(ai \la aj | bk) := (1-r_{ab}) T(ai \la aj | bk) + r_{ab} T^{\Join}(ai \la ak | bj)
\]
where the probability that parental genotype $aj, bk$ produces gamete haplotype $ai$, conditioned on the offspring containing modifier allele $a$,  is:
\begin{description}
\item[$T(ai \la aj | bk)$,] when no recombination occurs between the modifier and nearest selected locus, and
\item[$T^{\Join}(ai \la ak | bj)$,] when recombination occurs between the modifier and nearest selected locus (hence $aj | bk$ becomes $ak | bj$).
\end{description}
So, $1 = \sum_i  T(ai \la aj | bk) = \sum_i  T^{\Join}(ai \la ak | bj) , \; \forall a, b, j, k$.

%%%%%%%%%%%%%%%%%%%%%%%
\subsection{Dynamical Recursion for the Modifier Model}
With this genetic structure, recursion \eqref{eq:GeneralModel} becomes:
\eb
\label{eq:ModifierModel}
\wb  \;  z_{ai}' =  \sum_{bjk} T_{(r)}(ai \la aj | bk) \; w_{jk}  \; z_{aj}  \; z_{bk}
\ee
where
\begin{description}
\item[$z_{ai}$]  is the frequency of the haplotype with allele $a$ at the modifier locus, and haplotype $i$ at the selected loci, and $z_{ai}'$ is for the next generation;
\item[$ w_{jk}$] is the fitness coefficient for diploid genotype $jk$ at the loci under selection;
\item[$\wb := \sum_{abjk}w_{jk}  \; z_{aj}  \; z_{bk}$] is the mean fitness of the population,
\end{description}

The evolutionary analysis consists of asking how the values of $T_{(r)}(ai \la aj | bk)$ determine whether a modifier allele $a$ can invade a population and be protected from extinction.  

\section{Equilibria and their Stability}
A population at equilibrium under \eqref{eq:ModifierModel} must satisfy the constraint:
\eb
\label{eq:Equilibrium}
\wbh \; \zh_{bi} =  \sum_{cjk} T_{(r)}(bi \la bj | ck) \; w_{jk}  \; \zh_{bj}  \; \zh_{ck}.
\ee
A perturbation of the equilibrium to $z_{bi} = \zh_{bi} + \epsilon_{bi}$ produces:
\eban
\lefteqn{\left(\wbh + 2 \sum_{bjck} \epsilon_{bj} w_{jk}  \zh_{ck}+ \sum_{bjck} \epsilon_{bj} \epsilon_{ck}\right)  \;   (\zh_{bi} + \epsilon_{bi}')} \\
&=&  \sum_{cjk} T_{(r)}(bi \la bj | ck) \; w_{jk}  \; (\zh_{bj} + \epsilon_{bj} ) \; (\zh_{ck} + \epsilon_{ck}).
\eean

The long term evolution of genetic transmission depends on the properties that allow a new modifier allele to invade a population and be protected from extinction.  Hence the analysis focuses on perturbations of the equilibrium by rare modifier alleles, entailing $\zh_{ai} = 0$ for all $i$ for new modifier allele $a$.  Making this substitution, and ignoring all second and higher order terms in the perturbation, the linear recursion on a new modifier allele, $a$, that perturbs \eqref{eq:Equilibrium} can be represented in vector form as:
\eb
\label{eq:ExternalStability}
\epsi_a' = \M \; \D \; \epsi_a
\ee
where $\M$ is a stochastic matrix, $\D$ a positive diagonal matrix, and 
\eban
&\epsi_a := \matrx{\epsilon_{ai}}_{i=1}^{n},  \mbox{\ \ }\D := \diag{\wh_i / \wbh }_{i,j=1}^{n}, \mbox{\ \ } \wh_i := \sum_{bj} w_{ij} \zh_{bj}, \mbox{\ and \ }&\\
&\M :=  \matrx{\sum_{bk} T_{(r)}(ai \la aj | bk) \frac{w_{jk}}{\wh_j} \zh_{bk}}_{i, j=1}^{n}&.
\eean
Modifier allele $a$ will increase at a geometric rate when rare if and only if the spectral radius $\rho(\M\D)$ exceeds 1.  Clearly, if $\D = \I$, then $\rho(\M\D) = \rho(\M) = 1$, so geometric rates of change in modifier allele frequencies require $\D \neq \I$, a situation described by saying there is a positive {\it selection potential} (\citealt[``fitness load'' p. 63]{Altenberg:1984}; \citealt{Altenberg:and:Feldman:1987}):
\eb\label{eq:SelectionPotential}
V =  \frac{\max_i \wh_i}{ \wbh} - 1 > 0.
\ee
The analysis consists of evaluating the relationship between  $T_{(r)}(ai \la aj | bk)$ and $\rho(\M\D)$.

%%%%%%%%%%%%%%%%%%%%%%%%%%%%%%%%%%%%%%%%%%
\section{Variation in Transmission}
The reduction principle emerges in models where the modifier gene scales all the transition probabilities between different genotypes equally.  Variation in transmission falls along a line that intersects the matrix for perfect transmission (where parental haplotypes are transmitted unchanged to their gametes, and in equal proportions), which functions as an ``origin'' in this space of matrices.  For this reason it is called ``linear variation''  \citep{Altenberg:1984,Altenberg:and:Feldman:1987}.  

A mechanistic derivation of linear variation is that each selected haplotype $j$ has a certain probability, $\alpha$, of being ``hit'' by some transforming process, and given that it is hit, it is transformed into various other selected haplotypes with different probabilities. These probabilities, $T(i \la j | k)$, include possible dependence on both parental haplotypes $j$ and $k$. When the effect of the modifier gene is to scale the ``hit'' rate, $\alpha$, up or down equally for all haplotypes, it produces linear variation in transmission.

%%%%%%%%%%%%%%%%%%%%%%%%%%%
\subsection{Perfect Transmission}
A genetic system that perfectly transmits parental selected haplotypes to gametes can be represented by:
\eb
T_{\mbox{id}}(ai \la aj | bk) = T^{\Join}_{\mbox{id}}(ai \la aj | bk) =  \delta_{ij},
\ee
where $\delta_{ij} = 1$ if $i=j$ and $0$ otherwise.  Any genetic system can be characterized by the probability with which perfect transmission occurs.  For a given modifier genotype $ab$, a global characterization of the extent of perfect transmission can be captured by a lower bound, $1-\alpha_{ab}$, on the probability of perfect transmission over all selected haplotypes, where
\eb
1-\alpha_{ab} := \min_{i,k}\left\{ T(ai \la ai | bk),  T^{\Join}(ai \la ai | bk) \right\} \in [0, 1].
\ee
This lower bound can be used to parameterize the transmission probabilities:
\eb
\label{eq:Parameterized}
T(ai \la aj | bk) = (1-\alpha_{ab}) \; \delta_{ij} + \alpha_{ab} \;  P(ai \la aj | bk)
\ee
and
\eb
\label{eq:ParameterizedX}
T^{\Join}(ai \la aj | bk) = (1-\alpha_{ab}) \;  \delta_{ij} + \alpha_{ab} \;  P^{\Join}(ai \la aj | bk),
\ee
Thus $P(ai \la aj | bk) \geq 0$ and $P^{\Join}(ai \la aj | bk) \geq 0$  for all $i, j, k$, and for some $i,k$, $P(ai \la ai | bk) = 0$ or $P^{\Join}(ai \la ai | bk) = 0$.

{\bf Linear variation} is defined to be where the sole effect of the modifier locus is to determine $\alpha$:
\eb
\label{eq:TLinear}
T(ai \la aj | bk) = (1-\alpha_{ab}) \; \delta_{ij} + \alpha_{ab} \;  P(i \la j | k)
\ee
and
\eb
\label{eq:TLinearX}
T^{\Join}(ai \la ak | bj) = (1-\alpha_{ab}) \;  \delta_{ik} + \alpha_{ab} \;  P^{\Join}(i \la k | j).
\ee

\subsection{External Stability of a Modifier Monomorphism}  For the purposes of brevity, it becomes useful at this point to restrict the analysis to populations that are initially fixed on a single modifier allele.  The steps that follow apply to modifier polymorphisms, but require additional characterizations of the polymorphisms;  modifier polymorphisms under generalized transmission are analyzed in \citet{Altenberg:1984}, and for particular cases of transmission processes, in \citet{Liberman:and:Feldman:1986:MMR,Liberman:and:Feldman:1986:GRP,Feldman:and:Liberman:1986,Liberman:and:Feldman:1989}.

The population will initially be at equilibrium, fixed on modifier allele $b$ where $\alpha_{bb} > 0$.  Haplotypes will be listed only if they occur with positive frequency, so $\zh_{bi} > 0$ for all $i$ at equilibrium.  To this population at equilibrium, modifier allele $a$ is introduced.  Linear variation and $\alpha_{bb} > 0$ preclude the possibility that the new modifier allele will cause the production of selected haplotypes not already present in the population.

For readability, let $\alphah$ represent $\alpha_{bb}$, $\alpha$ represent $\alpha_{ab}$, and $r$ represent $r_{ab}$.  For linear variation \eqref{eq:TLinear}\eqref{eq:TLinearX}, the stability matrix $\M$ in \eqref{eq:ExternalStability} can be expressed as a function of $\alpha$ and $r$:
\eb
\begin{split}
\M(\alpha, r) &= \matrx{\sum_{k} T_{(r)}(ai \la aj | bk) \frac{w_{jk}}{\wh_j} \zh_{bk}}_{i, j=1}^{n} \\
&=  (1-\alpha) [(1\!-\!r) \I +  r \Q] \!\!\! + \alpha[ (1\!-\!r) \Sm +  r \Smt],
\end{split}
\ee
where
\eban
\lefteqn{\matrx{\sum_{k} \delta_{ij} \frac{w_{jk}}{\wh_j} \zh_{bk}}_{i, j=1}^{n} = \I, }\\
\Q &:=& \matrx{\sum_{k} \delta_{ik} \frac{w_{jk}}{\wh_j} \zh_{bk}}_{i, j=1}^{n} 
= \matrx{ \frac{w_{ij}}{\wh_j} \zh_{bi}}_{i, j=1}^{n},\\
\Sm  &:=&  \matrx{\sum_{k}   P(i \la j | k) \frac{w_{jk}}{\wh_j} \zh_{bk}}_{i, j=1}^{n}, \mbox{ and }\\
\Smt  &:=&  \matrx{\sum_{k}   P^{\Join} (i \la k | j) \frac{w_{jk}}{\wh_j} \zh_{bk}}_{i, j=1}^{n}.\eean
Letting $\ev$ be the vector of ones, and $\ev^\tp$ its transpose, note that
\eb
\label{eq:QSStMarkov}
\ev^\tp \Q = \ev^\tp \Sm = \ev^\tp \Smt = \ev^\tp,
\ee
since
$\sum_i P(i \la j|k) = \sum_i P^{\Join}(i \la k|j) = \sum_k {w_{jk}  \zh_{bk} / \wh_j} = 1$.
$\M(\alphah, r)$ can be used to express the equilibrium constraint \eqref{eq:Equilibrium}:
\eb
\label{eq:NewEquilibrium}
\begin{split}
\zvh_b &=  \M(\alphah, r) \D \zvh_b \\
&= \left\{ (1-\alphah) [(1\!-\!r) \I +  r \Q] + \alphah[ (1\!-\!r) \Sm +  r \Smt]\right\} \D \zvh_b. 
\end{split}
\ee

%%%%%%%%%%%%%%%%%%%%%%%%%%%%%%%%%%%%%
\section{Results}
The strategy for proving the reduction result for arbitrary $r$ is to try to allow the immediate application of Karlin's Theorem 5.2 by representing $\M(\alpha,r)$ as $\M(\alpha,r) = (1-\beta) \I + \beta \Mc$, for a choice of $\Mc$ where we know $\rho(\Mc\D)$.  Finding an irreducible stochastic such $\Mc$ would give 
\[
\rho(\M(\alpha,r) \D) > \rho(\Mc \D) \mbox{ for  } 0 < \beta < 1, 
\]
and
\[
\rho(\M(\alpha,r) \D) < \rho(\Mc \D)\mbox{ for  } \beta > 1.  
\]

For our choice of $\Mc$, we possess the crucial fact from \eqref{eq:NewEquilibrium} that $\rho(\M(\alphah,r) \D)=1$, so we might hope to find $\beta$ and $r$ to give $(1-\beta) \I + \beta \M(\alphah,r) = \M(\alpha,r)$, but this is unworkable.  Fortunately, there exists a matrix solution $\M(\alpha,r) = (1-\beta) \I + \beta \McSolve$, but $\McSolve$ may have negative elements along the diagonal for certain combinations of $\alpha$, $\alphah$, and $r$.  Such a matrix is known as a `Metzler-Leontief' \citep{Seneta:1981} or  `essentially nonnegative' \citep{Cohen:1981} matrix.  An extension of Karlin's Theorem 5.2 to ML-matrices is provided which allows proof for all $\alpha$, $\alphah$, and $r$.  We proceed in stages.

%%%%%%%%%%%%%%%%%%%%%%%%%%%%%%%
\subsection{Metzler-Leontief (ML) Matrices}
An ML-matrix is a square real matrix where all non-diagonal elements are non-negative.  The {\em spectral abscissa} of a matrix $\A$ is defined as $\pi(\A) = \max_i \{ \RealPart( \lambda_i)\}$, where $\{  \lambda_i \}$ are the eigenvalues of $\A$.  Let $\B$ refer to an ML-matrix, where $b_{ij} \geq 0$ for all $i\neq j$.   We utilize the following properties of ML-matrices:

\begin{Lemma}[Spectral Abscissa of ML Matrices]
\label{Lemma:ML}\ 
\benu
\item \label{ML:Abscissa} $\pi(\B)$ is an eigenvalue of $\B$, referred to as the Perron root as it is for positive matrices; 
\item \label{ML:BplusgI} $\pi(\B) = \rho(\B + g \I) - g$, for any $g \geq - \min_i \{ b_{ii}\}$; 
\item  \label{ML:PositiveLeftRight} If $\B \z = \lambda \B$ and $\z > 0$, then $\lambda = \pi(\B)$.  If in addition $\B$ is irreducible, then there also exists $\vv > 0$ such that $\vv^\tp \B = \pi(\B) \vv^\tp$, and $\z$ and $\vv$ are unique up to constant  multiples.
\eenu
\end{Lemma}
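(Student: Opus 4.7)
The unifying idea is to shift $\B$ by a multiple of $\I$ to make it entrywise non-negative, apply the classical Perron--Frobenius theorem to the shifted matrix, and translate the conclusions back to $\B$ using the fact that adding $g\I$ shifts every eigenvalue by $g$ while leaving the eigenvectors unchanged.

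\emph{Setup.} For any $g \geq -\min_i b_{ii}$, the matrix $\B + g\I$ has its off-diagonal entries inherited from $\B$ (non-negative by the ML hypothesis) and its diagonal entries $b_{ii} + g \geq 0$, so $\B + g\I$ is entrywise non-negative. Since adding $g\I$ does not alter any off-diagonal zero pattern, $\B + g\I$ is irreducible if and only if $\B$ is.

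\emph{Items 1 and 2.} By Perron--Frobenius, $\rho(\B + g\I)$ is a real non-negative eigenvalue of $\B + g\I$, so $\mu := \rho(\B + g\I) - g$ is a real eigenvalue of $\B$. For any eigenvalue $\lambda$ of $\B$, the number $\lambda + g$ is an eigenvalue of $\B + g\I$, so
\[
\rho(\B + g\I) \;\geq\; |\lambda + g| \;\geq\; \RealPart(\lambda + g) \;=\; \RealPart(\lambda) + g,
\]
which rearranges to $\mu \geq \RealPart(\lambda)$. Since $\mu$ is itself a real eigenvalue of $\B$, this forces $\mu = \max_i \RealPart(\lambda_i) = \pi(\B)$. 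This proves item 2 for every admissible $g$, and item 1 follows at once (the quantity $\pi(\B)$ is thereby exhibited as an eigenvalue).

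\emph{Item 3.} Suppose $\B\z = \lambda \z$ with $\z > 0$, so that $(\B + g\I)\z = (\lambda + g)\z$ is a positive right eigenvector relation for the non-negative matrix $\B + g\I$. Let $\vv \geq 0$, $\vv \neq \0$, be any left Perron vector, i.e., $\vv^\tp (\B + g\I) = \rho(\B + g\I)\,\vv^\tp$. Then
\[
(\lambda + g)\,\vv^\tp \z \;=\; \vv^\tp (\B + g\I)\z \;=\; \rho(\B + g\I)\,\vv^\tp \z,
\]
and $\vv^\tp \z > 0$ because $\z > 0$ and $\vv$ has at least one positive entry. Hence $\lambda + g = \rho(\B + g\I)$, i.e., $\lambda = \pi(\B)$ by item 2. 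If $\B$ is additionally irreducible, so is $\B + g\I$, and the irreducible Perron--Frobenius theorem supplies a strictly positive left eigenvector $\vv > 0$ for $\rho(\B + g\I)$; the same $\vv$ satisfies $\vv^\tp \B = \pi(\B)\vv^\tp$. Simplicity of the Perron root in the irreducible case then yields uniqueness of $\z$ and $\vv$ up to scalar multiples.

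\emph{Expected obstacle.} Once the shift trick is in hand the proof is largely bookkeeping. The one point requiring care is that the real eigenvalue $\mu = \rho(\B + g\I) - g$ is actually the \emph{maximum} real part, not just some real eigenvalue of $\B$; this is secured by the sandwich $\RealPart(\lambda + g) \leq |\lambda + g| \leq \rho(\B + g\I)$, which uses the modulus-domination clause of Perron--Frobenius and would fail for a generic real matrix.
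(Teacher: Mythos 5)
Your proof is correct, and it fills in exactly the standard argument that the paper leaves to its citations (Seneta's Theorem 2.6 and Gantmacher's Theorem 3 are themselves proved by the same shift $\B + g\I$ and appeal to Perron--Frobenius), so this is essentially the same approach made self-contained; you also correctly read the statement's typo $\B\z=\lambda\B$ as $\B\z=\lambda\z$. The one step worth flagging as careful rather than routine is the use of a merely non-negative left Perron vector of $\B+g\I$ in item 3 (needed since $\B$ may be reducible there), and your observation that $\vv^\tp\z>0$ because $\z>0$ handles it correctly.
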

\begin{proof}
These are found in, or follow directly from, Theorem 2.6 in \citep[pp. 45--46]{Seneta:1981} and Theorem 3 in \citep[p. 66]{Gantmacher:1959vol2}. %\flushright{\qed}
\end{proof}
%%%%%%%%%%%%%%%%%%%%%%%%%%%%%%%
\subsection{Solving for {\normalsize $\McSolve$}}

%%%%%%%%%%%%%%%
\begin{Lemma}
\label{Lemma:M*}
Define
\[
\Mc(\alpha, c_1, c_2) := (1-\alpha) \left[ \left( 1-c_1 \right) \I + c_1 \Q \right]  + \alpha   [ (1-c_2) \Sm + c_2 \Smt ]. 
\]
Then for $\alpha, \alphah \in (0, 1)$,  and $r \in [0,1]$:
\benu
\item $\M(\alpha,r) = (1-\beta) \I + \beta \Mc(\alphah, c_1, c_2)$ is solved by 
\[
\beta = \alpha / \alphah, \ c_1 = r \alphah(1-\alpha) / [\alpha(1-\alphah)], \mbox{\ and \ } c_2 = r,
\]
to give
\eb
\label{eq:Malphar}
\M(\alpha, r) = \left(1-\frac{\alpha}{\alphah}\right) \I + \frac{\alpha}{\alphah} \McSolve,
\ee
where
\eba
\label{eq:M*}
\dsty \McSolve &:=&  \Mc\left(\alphah, r \frac{\alphah(1-\alpha)}{\alpha(1-\alphah)}, \; r \right) \\
&=&
\left(1 \! -\alphah\! -\! r \frac{\dsty  \alphah }{\dsty\alpha} \!+ \!r \alphah \right) \I + r \alphah \left(\frac{1}{\dsty \alpha}-1\right) \Q + \alphah [ (1-r) \Sm + r \Smt]. \notag
\eea
\item $\ev^\tp \Mc(\alphah, c_1, r) = \ev^\tp$ and $\Mc(\alphah, c_1, r) \,  \D \, \zvh_b = \zvh_b$ for any $c_1 \in \Re$.
\item $\McSolve$ is an ML-matrix and, if $(1-\alphah) r \Q + \alphah[(1-r) \Sm + r \Smt]$ is irreducible, $\pi\left( \McSolve \, \D \right) = 1$.   
\item If, in addition, $\alpha \geq \alphah$ or $\alpha \geq  \alphah \; r / [1 - \alphah (1-r)]$, then $\McSolve$ is non-negative and $\pi\left( \McSolve \, \D \right) = \rho\left( \McSolve \, \D \right) = 1$.
\eenu
\end{Lemma}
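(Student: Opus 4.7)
The plan is to address the four parts in sequence. Parts (1) and (2) are algebraic identities that fix the decomposition and show it preserves the equilibrium eigenvector; parts (3) and (4) then read off the ML and non-negativity properties of $\McSolve$ and invoke Lemma \ref{Lemma:ML}. For (1), I would expand $(1-\beta)\I + \beta\Mc(\alphah, c_1, c_2)$ and match against $\M(\alpha,r) = (1-\alpha)[(1-r)\I + r\Q] + \alpha[(1-r)\Sm + r\Smt]$, treating $\{\I, \Q, \Sm, \Smt\}$ as a linearly independent basis for matching. Equating $\Sm$- and $\Smt$-coefficients gives $\beta\alphah(1-c_2) = \alpha(1-r)$ and $\beta\alphah c_2 = \alpha r$; their sum forces $\beta\alphah = \alpha$, hence $\beta = \alpha/\alphah$ and $c_2 = r$. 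Matching the $\Q$-coefficient then determines $c_1 = r\alphah(1-\alpha)/[\alpha(1-\alphah)]$, and the $\I$-coefficient equation holds automatically because the total coefficient sum on each side is $1$. Substituting these values back and regrouping yields the closed form \eqref{eq:M*}.

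For (2), the left-eigenvector identity $\ev^\tp\Mc(\alphah, c_1, r) = \ev^\tp$ is immediate from \eqref{eq:QSStMarkov} for any real $c_1$. The right-eigenvector identity is the crux of the lemma and what I expect to be the main obstacle. The key fact to establish is
\eb
\Q\D\zvh_b = \D\zvh_b,
\ee
which I would verify by direct entrywise computation: using $\Q_{ij} = w_{ij}\zh_{bi}/\wh_j$, $\D_{jj} = \wh_j/\wbh$, and $\wh_i = \sum_j w_{ij}\zh_{bj}$ at the modifier monomorphism, one obtains $(\Q\D\zvh_b)_i = (\zh_{bi}/\wbh)\sum_j w_{ij}\zh_{bj} = \zh_{bi}\wh_i/\wbh = (\D\zvh_b)_i$. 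Geometrically this says the selection-weighted equilibrium is a fixed point of $\Q$. Combined with the algebraic identity $\Mc(\alphah, c_1, r) - \M(\alphah, r) = (1-\alphah)(c_1 - r)(\Q - \I)$ and the equilibrium condition $\M(\alphah, r)\D\zvh_b = \zvh_b$ from \eqref{eq:NewEquilibrium}, this yields $\Mc(\alphah, c_1, r)\D\zvh_b = \zvh_b$ for every $c_1$.

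For (3), reading the closed form \eqref{eq:M*} shows that the coefficients of $\Q$, $\Sm$, $\Smt$ in $\McSolve$ are all non-negative on the parameter range $\alpha, \alphah \in (0,1)$, $r \in [0,1]$, while the $\I$-contribution affects only the diagonal; hence every off-diagonal entry of $\McSolve$ is non-negative, so $\McSolve$ is ML, and $\McSolve\D$ is ML since $\D > 0$ is diagonal. Specializing part (2) at the solution value of $c_1$ gives $\McSolve\D\zvh_b = \zvh_b$ with $\zvh_b > 0$, and Lemma \ref{Lemma:ML}.\ref{ML:PositiveLeftRight} then identifies $1 = \pi(\McSolve\D)$; the irreducibility hypothesis supplies the Perron eigenvector uniqueness invoked in downstream applications. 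For (4), the only potentially negative entry of $\McSolve$ is the $\I$-coefficient $1 - \alphah - r\alphah/\alpha + r\alphah$; clearing by $\alpha > 0$ and rearranging, its non-negativity is equivalent to $\alpha \geq r\alphah/[1 - \alphah(1-r)]$. Since $r \leq 1$ implies $r(1-\alphah) \leq 1-\alphah$, one has $r\alphah/[1-\alphah(1-r)] \leq \alphah$, so the simpler hypothesis $\alpha \geq \alphah$ also suffices. Under either, $\McSolve\D$ is entrywise non-negative, and Perron--Frobenius gives $\pi(\McSolve\D) = \rho(\McSolve\D) = 1$.
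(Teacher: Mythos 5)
Your proposal is correct and follows essentially the same route as the paper's proof: verify the decomposition by matching coefficients on $\I,\Q,\Sm,\Smt$, establish $\Q\D\zvh_b=\D\zvh_b$ by the same entrywise computation so that $\Mc(\alphah,c_1,r)\D\zvh_b$ is independent of $c_1$ and equals $\zvh_b$ via the equilibrium condition, then read off non-negativity of the off-diagonal terms and of the $\I$-coefficient and apply Lemma~\ref{Lemma:ML}. The only differences are cosmetic (e.g., you show $r\alphah/[1-\alphah(1-r)]\leq\alphah$ whereas the paper bounds the $\I$-coefficient below by $(1-\alphah)(1-r)$ when $\alpha\geq\alphah$), so there is nothing to flag.
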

\begin{proof}
Straightforward evaluation verifies \eqref{eq:Malphar} and \eqref{eq:M*} from:
\eban
\McSolve &=& (1-\alphah) \left[ \left( 1- r\frac{\alphah(1-\alpha)}{\alpha(1-\alphah)}\right) \I +  r\frac{\alphah(1-\alpha)}{\alpha(1-\alphah)} \Q \right] \\
&& + \  \alphah  \;\;  [ (1-r) \Sm + r \Smt ].
\eean
By \eqref{eq:QSStMarkov}, $\ev^\tp \Mc(\alphah, c_1, c_2) = [(1-\alphah)(1-c_1+c_1) + \alphah(1-c_2+c_2)]\ev^\tp = \ev^\tp$.  Observe that $\I \D \zvh_b = \Q \D\zvh_b = \D \zvh_b$:
\eban
\Q \D \zvh_b &=&  \matrx{ \frac{w_{ij}}{\wh_j} \zh_{bi}}_{i, j=1}^{n} \diag{\frac{\wh_i}{\wbh}}_{i,j=1}^{n}  \zvh_b \\
&=&  \matrx{\sum_j \frac{w_{ij}}{\wbh} \zh_{bi} \zh_{bj}}_{i,j=1}^{n} =  \matrx{\frac{\wh_{i}}{\wbh} \zh_{bi}}_{i,j=1}^{n} = \D \zvh_b.
\eean
Hence $\Mc(\alphah, c_1, r) \,  \D \, \zvh_b =$
\[
(1-\alphah) \left[ \left( 1 \! - \! c_1 \right) \I + c_1 \Q \right]  + \alphah   [ (1-r) \Sm + r \Smt ] \D \zvh_b
\]
is invariant for all $c_1 \in \Re$.  Since $\Mc(\alphah, r, r) := \M(\alphah,r)$, then
\eb
\label{eq:M*DVz=z}
\Mc(\alphah,  c_1, r)  \D \zvh_b = \Mc(\alphah, r, r)  \D \zvh_b = \M(\alpha,r) \D \zvh_b = \zvh_b.
\ee

The off-diagonal elements in \eqref{eq:M*} are non-negative as they all derive from non-negative stochastic matrices---$\Q$, $\Sm$, and $\Smt$---multiplied by non-negative coefficients under conditions $\alphah \in (0,1)$, $c_1 \geq 0$, and $r \in [0,1]$.     

If $(1-\alphah) r \Q + \alphah[(1-r) \Sm + r \Smt]$ is irreducible, then since $\zvh_b$ is strictly positive, $\zvh_b$ is the right Perron eigenvector of $\Mc(\alphah,  c_1, r)  \D$ with Perron root 1, by Lemma \ref{Lemma:ML} (\ref{ML:PositiveLeftRight}).  The coefficient on $\I$, $1 -\alphah  -  r \alphah  / \alpha+ r \alphah$, is non-negative when $\alpha > \alphah$, for then $1 -\alphah -  r \alphah / {\alpha}+r \alphah \geq (1-\alphah)(1-r) \geq 0$, but is negative if $\alpha <  \alphah \; r / [1 - \alphah (1-r)]$ (found by simple rearrangement).  Thence $\McSolve$ may have negative diagonal elements.  When $\McSolve$ is non-negative, then $\pi(\McSolve \, \D) = \rho(\McSolve \, \D)$. %\flushright{\qed}
\end{proof}
%%%%%%%%%%%%%%%%%%%%%%%%%%%%%%%
\subsection{Extending Karlin's Theorem 5.2 to ML-Matrices}
\ \\
Friedland's \citeyearpar{Friedland:1981} Donsker-Varadhan related variational formula for the spectral radius is applied to $\A \geq 0$ in order to show that it applies also to the Perron root of $\B = \A - g \, \I$:
%%%%%%%%%%%%%%
\begin{Lemma}[Variational Formula for ML-Matrices]
\label{Lemma:muB}
For any irreducible ML-matrix $\B$, the spectral abscissa is
\eb
\label{eq:muB}
\pi ( \B ) =  \sup_{\pv \in \Pc_n} \inf_{\x > \0} \sum_{i=1}^n p_i  \frac{[\B \x]_i}{x_i},
\ee
where $\Pc_n = \{ \pv \colon \pv \geq 0, \ev^\tp \pv = 1\}$.
\end{Lemma}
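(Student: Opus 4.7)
The plan is to obtain the formula by shifting $\B$ to a non-negative matrix and applying Friedland's variational formula for the spectral radius. By Lemma \ref{Lemma:ML}(\ref{ML:BplusgI}), for any $g \geq -\min_i b_{ii}$, the matrix $\A := \B + g\I$ satisfies $\A \geq 0$ and $\rho(\A) = \pi(\B) + g$. I would fix such a $g$ (strictly positive off-diagonal entries of $\B$ are preserved in $\A$, so irreducibility carries over), and thereby reduce the problem to a known statement about the non-negative irreducible matrix $\A$.

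Next, I would invoke Friedland's (1981) Donsker--Varadhan-type variational formula as it is stated for non-negative irreducible matrices:
\[
\rho(\A) \;=\; \sup_{\pv \in \Pc_n} \inf_{\x > \0} \sum_{i=1}^n p_i \, \frac{[\A\x]_i}{x_i}.
\]
This is the key external input; I would cite Friedland directly rather than re-derive it. Note the formula requires irreducibility (so that we may restrict to strictly positive $\x$ without degeneracy) and non-negativity (so that $[\A\x]_i/x_i \geq 0$ and the inner inf is attained on the cone $\x > \0$).

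The final step is a one-line algebraic unshift. Since $\A\x = \B\x + g\x$, we have $[\A\x]_i/x_i = [\B\x]_i/x_i + g$ for every $i$ and every $\x > \0$. Because $\ev^\tp \pv = 1$ for all $\pv \in \Pc_n$, the $g$ terms aggregate to a constant $g$ that passes through both the $\inf$ over $\x$ and the $\sup$ over $\pv$. Hence
\[
\rho(\A) \;=\; \sup_{\pv \in \Pc_n} \inf_{\x > \0} \sum_{i=1}^n p_i \, \frac{[\B\x]_i}{x_i} \;+\; g,
\]
and subtracting $g$ from both sides, together with $\pi(\B) = \rho(\A) - g$, yields \eqref{eq:muB}.

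The main obstacle, such as it is, is just making sure the translation from the non-negative case is airtight: one must check (i) that irreducibility is preserved under the diagonal shift, (ii) that the inner optimization is taken over the same domain $\x > \0$ (so no issue with $x_i = 0$), and (iii) that the value of the shifted expression truly differs from the original by exactly the additive constant $g$ uniformly in $(\pv, \x)$. All three are immediate, so once Friedland's formula is cited the proof is essentially a change of variables.
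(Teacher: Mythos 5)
Your proof is correct and follows essentially the same route as the paper: shift by $g\I$ to get a non-negative irreducible matrix, invoke Friedland's (1981) Donsker--Varadhan formula, and use Lemma \ref{Lemma:ML}(\ref{ML:BplusgI}) to subtract the constant $g$. The paper's proof additionally remarks that the optimizing $(\pv,\x)$ are the same for $\B$ and $\A=\B+g\I$ (a fact used later in Lemma \ref{Lemma:dbB}), but for the lemma as stated your argument is complete.
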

%%%%
\begin{proof}
Define for any matrix $\A$, 
\eb
\label{eq:DVfunction}
f(\A, \pv, \x) := \sum_{i=1}^n p_i  \frac{[ \A \x]_i}{x_i}.
\ee
The Donsker-Varadhan related variational formula \citep[Corollary 3.1]{Friedland:1981} for irreducible $\A \geq 0$ gives us $\rho(\A)  = \sup_{\pv \in \Pc_n} \inf_{\x > \0} f(\A, \pv, \x)$.  Substituting $\A = \B + g \, \I$:
\eban
\rho(\A) & =& \rho( \B + g \, \I ) 
= \sup_{\pv \in \Pc_n} \inf_{\x > \0} \sum_{i=1}^n p_i  \frac{[ ( \B + g \, \I ) \x]_i}{x_i}  \\
& =& \sup_{\pv \in \Pc_n} \inf_{\x > \0} \sum_{i=1}^n p_i  \frac{[ \B \x]_i}{x_i} \; \; + g.
\eean
Knowing from Lemma \ref{Lemma:ML} (\ref{ML:BplusgI}) that $\rho(\A ) = \pi( \B) + g$ proves the lemma.  Moreover, 
\[
\sup_{\pv \in \Pc_n} \inf_{\x > \0} f(\A, \pv, \x)
\]
and
\[
\sup_{\pv \in \Pc_n} \inf_{\x > \0} f(\B, \pv, \x)
\]
are both attained for the same $\pv$ and $\x$. %\flushright{\qed}
\end{proof}

%%%%%%%%%%%%%%
\begin{Lemma}[Derivative of the Spectral Abscissa]
\label{Lemma:dbB}
Let $\B(\beta)$ be a function of $\beta \in \Re$, in continuity class $C^2$, such that $\B(\beta)$ is an ML-matrix for $\beta \geq 0$.  
Let $\pv(\beta)$ and $\x(\beta)$ ($\x$ normalized so $\ev^\tp \x(\beta) = 1$) be the vectors at which the supremum and the infimum in \eqref{eq:muB}, are attained, respectively.  Then
\eban
\db{} \pi ( \B(\beta) ) = \sum_{i=1}^n p_i(\beta)  \frac{[ \dsty \db{\B(\beta)} \;\; \x(\beta)]_i}{x_i(\beta)}.
\eean
\end{Lemma}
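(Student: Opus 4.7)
My plan is to apply an envelope / Danskin-type argument to the saddle-point formula in Lemma \ref{Lemma:muB}. Set
\[
f(\beta, \pv, \x) := \sum_{i=1}^n p_i \, \frac{[\B(\beta) \, \x]_i}{x_i},
\]
so that $\pi(\B(\beta)) = f(\beta, \pv(\beta), \x(\beta))$. Applying the ordinary chain rule gives
\[
\db{\pi(\B(\beta))} = \left.\frac{\partial f}{\partial \beta}\right|_{(\pv(\beta),\x(\beta))} + \nabla_{\pv} f \cdot \db{\pv(\beta)} + \nabla_{\x} f \cdot \db{\x(\beta)}.
\]
The goal is to show the two cross terms vanish; the surviving $\partial f / \partial \beta$ is exactly $\sum_i p_i(\beta)[\db{\B(\beta)}\,\x(\beta)]_i / x_i(\beta)$, as claimed.

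For the $\pv$ cross term: since $\pv(\beta)$ is an interior maximizer on the simplex $\Pc_n$, the Lagrange condition attached to the constraint $\ev^\tp \pv = 1$ gives $\nabla_{\pv} f = \lambda_1 \ev$ at $(\pv(\beta), \x(\beta))$. Differentiating $\ev^\tp \pv(\beta) = 1$ in $\beta$ yields $\ev^\tp \db{\pv(\beta)} = 0$, so the cross term equals $\lambda_1 \ev^\tp \db{\pv(\beta)} = 0$. For the $\x$ cross term: $f$ is homogeneous of degree zero in $\x$, since rescaling $\x \mapsto t \x$ leaves every ratio $[\B \x]_i / x_i$ unchanged; hence the infimum is attained along a ray, and the normalization $\ev^\tp \x(\beta) = 1$ picks one representative. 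At that representative, the Lagrange condition gives $\nabla_{\x} f = \lambda_2 \ev$, while differentiating the normalization gives $\ev^\tp \db{\x(\beta)} = 0$, and so the second cross term also vanishes.

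The remaining technical point, and the main obstacle, is smoothness of the optimizers $\pv(\beta)$ and $\x(\beta)$ as $\beta$ varies. I would deduce this from Lemma \ref{Lemma:ML}: for irreducible $\B(\beta)$ the Perron root is algebraically simple with strictly positive right and left eigenvectors $\x(\beta)$ and $\vv(\beta)$, so the standard analytic perturbation theory for simple eigenvalues of a $C^2$ matrix family delivers $C^1$ (in fact $C^2$) dependence on $\beta$ of both eigenvectors; and one checks, as in the proof of Lemma \ref{Lemma:muB}, that the sup in \eqref{eq:muB} is attained at $p_i(\beta) = v_i(\beta) x_i(\beta) / (\vv(\beta)^\tp \x(\beta))$, which inherits $C^1$ regularity. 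A self-contained alternative that bypasses the envelope machinery is to differentiate $\B(\beta)\,\x(\beta) = \pi(\B(\beta))\,\x(\beta)$, left-multiply by $\vv(\beta)^\tp$, and use $\vv^\tp \B = \pi \vv^\tp$ to cancel the $\db{\x(\beta)}$-contribution, yielding $\db{\pi(\B(\beta))} = \vv^\tp \db{\B(\beta)}\, \x / (\vv^\tp \x)$; substituting $p_i = v_i x_i / (\vv^\tp \x)$ then reproduces the stated formula.
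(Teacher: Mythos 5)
Your main argument is essentially the paper's own proof: the paper also differentiates the Donsker--Varadhan functional $f$ through the chain rule and kills the $\pv$- and $\x$-cross terms by invoking the criticality of $\pv(\beta)$ and $\x(\beta)$ (citing Friedland--Karlin and Friedland for uniqueness of the critical points), exactly the envelope step you perform; your version is if anything slightly more explicit, since you handle the simplex and normalization constraints via Lagrange multipliers and the tangency relations $\ev^\tp\db{\pv}=\ev^\tp\db{\x}=0$ rather than asserting that the constrained partials vanish. The one genuinely different element is your closing alternative: differentiating $\B(\beta)\x(\beta)=\pi(\B(\beta))\x(\beta)$ and pairing with the left Perron vector to get $\db{}\pi=\vv^\tp\db{\B}\x/(\vv^\tp\x)$, then substituting $p_i=v_ix_i/(\vv^\tp\x)$. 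That route is more elementary and self-contained (it needs only Lemma \ref{Lemma:ML} and simple-eigenvalue perturbation theory, not the variational formula), at the cost of having to verify separately that the variational optimizers coincide with $(v_ix_i/(\vv^\tp\x))_i$ and $\x$, which the paper's envelope route gets for free because it works directly with the objects appearing in Lemma \ref{Lemma:muB}. Note also that, like the paper, your argument implicitly needs $\B(\beta)$ irreducible (for positivity and uniqueness of $\pv(\beta)$, $\x(\beta)$, $\vv(\beta)$), even though the lemma statement omits the word; it is worth saying so explicitly.
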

%%%%
\begin{proof}
Let $\A(\beta) = \B(\beta) + g \, \I \geq 0$ be a non-negative matrix associated with $\B(\beta)$.    So $\db{} \A(\beta) = \db{}\B(\beta)$.  By Lemma \ref{Lemma:ML} (\ref{ML:BplusgI}), $\db{} \rho(\A(\beta) ) = \db{} \pi(\B(\beta))$.  Differentiating \eqref{eq:DVfunction}:  $\db{} f(\A, \pv, \x) = $
\[
\frac{\partial f(\A, \pv, \x)}{\partial \A} \db{\A} + \frac{\partial f(\A, \pv, \x)}{\partial \pv} \db{\pv} + \frac{\partial f(\A, \pv, \x)}{\partial \x}\db{\x}
\]
Since  $\pv(\beta)$ and $\x(\beta)$ are unique critical points of $f(\A(\beta), \pv(\beta), \x(\beta))$ \citep{Friedland:and:Karlin:1975,Friedland:1981,Karlin:1982}:
\eb\nonumber
\label{eq:CriticalPts}
\left. \frac{\partial f(\A, \pv, \x)}{\partial \pv} \db{\pv} \right |_{\Above{\A (\beta),}{ \pv(\beta), \x(\beta)}} = \left.  \frac{\partial f(\A, \pv, \x)}{\partial \x}\db{\x} \right|_{\Above{\A (\beta),}{ \pv(\beta), \x(\beta)}} = 0,
\ee
hence
\eban
\lefteqn{\db{} \rho(\A(\beta) ) = \db{} \pi(\B(\beta)) =  \left. \frac{\partial f(\A, \pv, \x)}{\partial \A} \db{\A}\right|_{\Above{\A(\beta),}{\pv(\beta), \x(\beta)} } }\\[8pt]
&=& 
\sum_{i=1}^n p_i(\beta)  \frac{[\db{\dsty \A(\beta)} \x(\beta)]_i}{x_i(\beta)}
= \sum_{i=1}^n p_i(\beta)  \frac{[\db{\dsty \B(\beta)} \x(\beta)]_i}{x_i(\beta)}.%\qedhere
\eean
\end{proof}

%%%%%%%%%%%%%%
\begin{Theorem} [Extension to ML-Matrices]
\label{Theorem:KarlinForML}
Let $\Lm$ be an irreducible ML-matrix such that $\ev^\tp \Lm = \ev^\tp$.  Consider the family of matrices
 \[
 \C(\beta) = (1-\beta) \I + \beta \, \Lm
 \]
with $\beta > 0$.  Then for any diagonal matrix $\D$ not a multiple of $\I$, with positive terms on the diagonal, the spectral abscissa and Perron root
\[
\pi( \C{(\beta)} \D )
\]
is strictly decreasing as $\beta $ increases.
\end{Theorem}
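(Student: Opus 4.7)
The plan is to apply Lemma \ref{Lemma:dbB} to $\B(\beta) = \C(\beta)\D$, whose derivative is $d\B/d\beta = (\Lm - \I)\D$, and then bound the result with the test-vector side of Lemma \ref{Lemma:muB}. Lemma \ref{Lemma:dbB} gives
\[
\frac{d}{d\beta}\,\pi\bigl(\C(\beta)\D\bigr) \;=\; \sum_{i} p_i(\beta)\,\frac{[(\Lm-\I)\D\,\x(\beta)]_i}{x_i(\beta)},
\]
where $(\pv(\beta),\x(\beta))$ is the saddle pair of Lemma \ref{Lemma:muB}, and $\x(\beta)$ is (proportional to) the strictly positive right Perron eigenvector of $\C(\beta)\D$.

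The first substantive step is to rewrite this derivative in a form that decouples it from the detailed structure of $\Lm$. Using $\C(\beta)\D\,\x(\beta) = \pi\,\x(\beta)$ one solves componentwise $[\Lm\D\,\x(\beta)]_i = \bigl(\pi - (1-\beta)d_i\bigr)x_i(\beta)/\beta$, and substitution yields
\[
\frac{d\pi}{d\beta} \;=\; \frac{\pi\bigl(\C(\beta)\D\bigr) - \bar{d}}{\beta}, \qquad \bar{d} := \sum_i p_i(\beta)\,d_i.
\]
Hence the problem reduces to proving the strict upper bound $\pi\bigl(\C(\beta)\D\bigr) < \bar{d}$ whenever $\D \neq d\,\I$.

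For weak inequality I would exploit the hypothesis $\ev^\tp\Lm = \ev^\tp$: by Lemma \ref{Lemma:ML}, $\pi(\Lm)=1$, and irreducibility of $\Lm$ furnishes a right Perron vector $\uv > \0$ with $\Lm\uv = \uv$. Set the test vector $\x_0 := \D^{-1}\uv > \0$. Then
\[
\C(\beta)\D\,\x_0 \;=\; [(1-\beta)\I + \beta\Lm]\uv \;=\; \uv \;=\; \D\,\x_0,
\]
so $[\C(\beta)\D\,\x_0]_i / [\x_0]_i = d_i$ and $f\bigl(\C(\beta)\D,\pv(\beta),\x_0\bigr) = \bar{d}$. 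Since $\x(\beta)$ attains the infimum in Lemma \ref{Lemma:muB}, $\pi \leq f\bigl(\C(\beta)\D,\pv(\beta),\x_0\bigr) = \bar{d}$, giving $d\pi/d\beta \leq 0$.

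The main obstacle is the strict inequality when $\D \neq d\,\I$. Equality $\pi = \bar{d}$ would force $\x_0$ to attain the infimum of $f\bigl(\C(\beta)\D,\pv(\beta),\,\cdot\,\bigr)$; by the uniqueness (up to scale) of critical points established in \citet{Friedland:and:Karlin:1975,Friedland:1981,Karlin:1982}, this would make $\x_0$ the right Perron eigenvector of $\C(\beta)\D$. But $\C(\beta)\D\,\x_0 = \D\,\x_0$ is a scalar multiple of $\x_0$ only when every diagonal entry of $\D$ is the same, contradicting $\D \neq d\,\I$. Therefore equality is impossible, $\pi < \bar{d}$, and $d\pi/d\beta < 0$, proving strict monotonic decrease.
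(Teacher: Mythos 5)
Your proposal is correct and takes essentially the same route as the paper's proof: Lemma \ref{Lemma:dbB} yields the identity $\db{}\pi(\C(\beta)\D)=\frac{1}{\beta}\bigl[\pi(\C(\beta)\D)-\sum_i p_i(\beta)\D_{ii}\bigr]$, and the paper likewise inserts the test vector $\D^{-1}\y$ (with $\y$ the Perron vector of $\C(\beta)$, i.e.\ of $\Lm$) into the variational formula to force $\pi(\C(\beta)\D)<\sum_i p_i(\beta)\D_{ii}$. The only cosmetic difference is in the equality case, which both of you settle via the uniqueness of the minimizing $\x$: the paper derives a contradiction with $\pv(\beta)>\0$ (it would force $p_i(\beta)=0$ wherever $\D_{ii}$ is not maximal), while you derive one with $\D\neq d\,\I$ (it would make $\D^{-1}\y$ a Perron eigenvector of $\C(\beta)\D$).
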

%%%%%%%%
\begin{proof}
Let $\B(\beta) = \C(\beta) \D$.  Applying Lemma \ref{Lemma:dbB} we have:
\eban
\db{} \pi (  \C(\beta) \D ) = \sum_{i=1}^n p_i(\beta)  \frac{[(\Lm- \I)  \D  \x(\beta)]_i}{x_i(\beta)},  
\eean
since $\db{}\B(\beta) =  \db{}\C(\beta) \; \D = (\Lm- \I)  \D$.

Following Karlin's proof in \citet{Karlin:1982}, we note that for $\beta > 0$, 
\[
\Lm - \I = \frac{1}{\beta} \left([(1-\beta)\I + \beta \Lm]  - \I \right) = \frac{1}{\beta}[\C(\beta) - \I ].
\]
Hence
\eba
\label{eq:dmuCalc}
\db{ \pi ( \C(\beta) \D )} &=& \frac{1}{\beta} \sum_{i=1}^n p_i(\beta)  \frac{[(\C(\beta) - \I)  \D  \x(\beta)]_i}{x_i(\beta)} =  
\dsty  \frac{1}{\beta} \!\!\left[ \pi ( \C(\beta) \D ) \! -  \!\!\sum_{i=1}^n p_i(\beta)   \frac{\D_{ii} x_i}{x_i} \right] \notag \\ 
&=& \frac{1}{\beta} \left[ \pi (\C(\beta) \D )\!  - \!\! \sum_{i=1}^n p_i(\beta)  \D_{ii} \right].
\eea
As $\x(\beta)$ is unique and produces the infimum, for $\0 < \x \neq \x(\beta) $:
\eb\nonumber
\label{eq:xbetax}
\sum_{i=1}^n p_i(\beta)  \frac{[ \C(\beta) \D \x(\beta)]_i}{x_i(\beta)} <
\sum_{i=1}^n p_i(\beta)  \frac{[ \C(\beta) \D  \x]_i}{x_i}.
\ee

Let $\y > \0$ be the right Perron eigenvector of $\C(\beta)$.  Since $\ev^\tp \Lm =  \ev^\tp \C(\beta) = \ev^\tp$, we know that $\pi(\C(\beta))=1$, hence $\C(\beta) \; \y = \y$.  Now, set $\x = \D^{-1} \y$, where $\y$ is scaled so that $\ev^\tp \x = 1$.  Assuming $\D^{-1} \y \neq \x(\beta)$, we get:
\eba 
\label{eq:KarlinForMLKey}
\lefteqn{\pi(\C(\beta)\D) =
\sum_{i=1}^n p_i(\beta)   \frac{[ \C(\beta) \D \x(\beta)]_i}{x_i(\beta)}
} \\
&<  & \sum_{i=1}^n p_i(\beta)  \frac{[ \C(\beta) \D  \D^{-1} \y]_i}{(\D^{-1} \y)_i} =  
\sum_{i=1}^n p_i(\beta)  \frac{y_i}{ y_i / \D_{ii}}=\sum_{i=1}^n p_i(\beta) \D_{ii} \nonumber
\eea
We verify that $\D^{-1} \y \neq \x(\beta)$: should $\D^{-1} \y = \x(\beta)$, then \eqref{eq:KarlinForMLKey} is an equality, thence $\pi ( \C(\beta) \D ) =  \sup_{\pv \in \Pc_n}  p_i \D_{ii} = \max_i  \D_{ii} $.  The supremum requires $p_i(\beta)=0$ for all $\{i \colon \; \D_{ii} < \max_j  \D_{jj} \}$, which is nonempty because $\D \neq d \, \I$ for any $d$.  In contradiction, irreducible $\Lm$ implies $\pv > \0$ \citep[{eq. (3.5)}, Theorem 3.2, and Corollary 3.1]{Friedland:1981}.

Application to \eqref{eq:dmuCalc} gives:
\[
\db{} \pi ( \C(\beta) \D )  =  \frac{1}{\beta} \left( \pi (\C(\beta) \D ) -  \sum_{i=1}^n p_i (\beta)  \D_{ii} \right) < 0. 
\] 
\end{proof} 

\begin{Corollary}[Extension to Substochastic-ML Matrices]
\label{Corollary:KarlinForSubstochasticML}
Let $\Lm$ be an irreducible ML-matrix such that $\ev^\tp \Lm \leq \ev^\tp$ and $\ev^\tp \Lm \neq \ev^\tp$.  By extension from non-negative matrices this will be referred to as a substochastic-ML matrix.  Consider the family of matrices $ \C(\beta) = (1-\beta) \I + \beta \, \Lm$ with $\beta > 0$.  Then for any diagonal matrix $\D$ with positive terms on the diagonal, the Perron root and spectral abscissa $\pi( \C{(\beta)} \D )$ is strictly decreasing as $\beta$ increases.
\end{Corollary}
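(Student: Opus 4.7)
The plan is to push through the proof of Theorem \ref{Theorem:KarlinForML} almost verbatim, replacing the role of ``$\D$ not scalar'' with the strict substochasticity of $\Lm$. The key observation is that $\ev^\tp \Lm \leq \ev^\tp$ with at least one strict inequality forces $\pi(\C(\beta)) < 1$ for every $\beta > 0$, and this single factor is what supplies the strict inequality needed in the variational argument, regardless of whether $\D$ is a multiple of $\I$.

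First I would establish $\pi(\Lm) < 1$. Lemma \ref{Lemma:ML}(\ref{ML:PositiveLeftRight}) applied to the irreducible ML-matrix $\Lm$ yields a strictly positive right Perron eigenvector $\y > \0$ with $\Lm \y = \pi(\Lm)\, \y$. Left-multiplying by $\ev^\tp$ and using $\ev^\tp \Lm \leq \ev^\tp$ with slack at some coordinate $j$, where $y_j > 0$, gives $\pi(\Lm)\,\ev^\tp \y = \ev^\tp \Lm \y < \ev^\tp \y$, hence $\pi(\Lm) < 1$. Since $\I$ and $\Lm$ commute, $\pi(\C(\beta)) = 1 - \beta(1 - \pi(\Lm)) < 1$ for all $\beta > 0$; moreover, $\C(\beta)$ has the same off-diagonal sparsity pattern as $\Lm$ once $\beta > 0$, so $\C(\beta)$, and therefore $\C(\beta)\D$, inherits irreducibility.

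The derivative identity from the proof of Theorem \ref{Theorem:KarlinForML} then applies verbatim, using Lemma \ref{Lemma:dbB} and the rewrite $\Lm - \I = \beta^{-1}(\C(\beta)-\I)$:
\[
\db{}\, \pi\!\bigl(\C(\beta)\D\bigr) \;=\; \frac{1}{\beta}\!\left[\,\pi\!\bigl(\C(\beta)\D\bigr) \;-\; \sum_{i=1}^n p_i(\beta)\, \D_{ii}\right].
\]
Now let $\y > \0$ be the right Perron eigenvector of $\C(\beta)$ (so $\C(\beta)\y = \pi(\C(\beta))\,\y$), normalized so that $\ev^\tp \D^{-1}\y = 1$, and set $\x = \D^{-1}\y$. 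Then $\C(\beta)\D\x = \pi(\C(\beta))\,\y$, so $[\C(\beta)\D\x]_i/x_i = \pi(\C(\beta))\,\D_{ii}$, and feeding this trial $\x$ (together with the optimal $\pv(\beta)$) into the variational formula of Lemma \ref{Lemma:muB} yields
\[
\pi\!\bigl(\C(\beta)\D\bigr) \;\leq\; \sum_{i=1}^n p_i(\beta)\, \pi(\C(\beta))\, \D_{ii} \;<\; \sum_{i=1}^n p_i(\beta)\, \D_{ii},
\]
where the final strict inequality comes from $\pi(\C(\beta)) < 1$, $\D_{ii} > 0$, and $\sum_i p_i(\beta) = 1$ with $p_i(\beta) > 0$ (irreducibility of $\C(\beta)$ via \citealp[Cor. 3.1]{Friedland:1981}). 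Substituting into the derivative identity gives $d\pi/d\beta < 0$.

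The step I expect to require the most care is bookkeeping the scalar-$\D$ case, which Theorem \ref{Theorem:KarlinForML} had to exclude. Here strict substochasticity contributes the extra factor $\pi(\C(\beta)) < 1$ in the bound above, so no contradiction argument of the form ``$\D^{-1}\y \neq \x(\beta)$'' is required; the strict inequality is automatic whether or not $\D$ is proportional to $\I$. The one hypothesis of the corollary that must be actively used, beyond the mechanics inherited from Theorem \ref{Theorem:KarlinForML}, is precisely $\ev^\tp \Lm \neq \ev^\tp$, which is what drives the opening Perron-eigenvector argument for $\pi(\Lm) < 1$.
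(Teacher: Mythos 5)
Your proposal is correct and follows essentially the same route as the paper's proof: the same derivative identity from Theorem \ref{Theorem:KarlinForML}, the same trial vector $\x = \D^{-1}\y$ built from the Perron eigenvector of $\C(\beta)$, and the same observation that strict substochasticity supplies $\pi(\C(\beta)) < 1$ and hence the strict inequality, making the hypothesis $\D \neq d\,\I$ unnecessary. The only (harmless) difference is that you derive $\pi(\C(\beta)) < 1$ self-containedly via the Perron eigenvector of $\Lm$ and the affine relation $\pi(\C(\beta)) = 1-\beta(1-\pi(\Lm))$, whereas the paper cites Seneta together with Lemma \ref{Lemma:ML}(\ref{ML:BplusgI}).
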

%%%%%%%
\begin{proof}
The steps are identical to those in the proof of Theorem \ref{Theorem:KarlinForML} except that $\ev^\tp \Lm \leq \neq \ev^\tp$ means $\ev^\tp \C(\beta) \leq \neq \ev^\tp$, hence $\pi(\C(\beta)) < 1$ (\citealt[Corollary 3, p. 30; Corollary 3, p. 52]{Seneta:1981} and Lemma \ref{Lemma:ML} (\ref{ML:BplusgI})).  No assumption that $\D^{-1} \y \neq \x(\beta)$ is needed, and \eqref{eq:KarlinForMLKey} becomes:
\eban
\pi(\C(\beta)\D)& \leq  
%\sum_{i=1}^n p_i(\beta)  \frac{[ \C(\beta) \D  \D^{-1} \y]_i}{(\D^{-1} \y)_i} =  
&\sum_{i=1}^n p_i(\beta)  \frac{\pi(\C(\beta)) \, y_i}{(\D^{-1} \y)_i}
= \pi(\C(\beta)) \sum_{i=1}^n p_i(\beta) d_i  < \sum_{i=1}^n p_i(\beta) d_i .
\eean
Hence from \eqref{eq:dmuCalc}, $\db{} \pi(\C(\beta) \D) < 0$.  No assumption $\D \neq d\, \I$ is used.
%\qed
\end{proof}

%%%%%%%%%%%%%%%%%%%%%%%%%%%%%%%
\subsection{Main Result}

\begin{Theorem}[The Reduction Principle for Linear Variation]
\label{Theorem:Main}
Suppose a new allele at a modifier locus is introduced into a population fixed at the modifier locus near an equilibrium with a positive selection potential among the selected haplotypes, and $\alphah < 1$. The new modifier allele produces linear variation in the transmission of the selected loci, and is linked to the nearest selected locus with recombination rate $r$.  

If $\M(\alpha, r)$ is irreducible, the new modifier allele will increase (decrease) in frequency at a geometric rate if it brings the transmission closer to (further away from) perfect transmission, i.e., $\alpha < \alphah$ ($\alpha > \alphah$).

If $\M(\alpha, r)$ is reducible, then:  
\benu
\item \label{ThmMain:Increase} the new modifier allele will increase in frequency at a geometric rate if $\alpha < \alphah$, provided it occurs in at least one haplotype whose marginal fitness differs from the mean fitness of the population; and if $\alpha > \alphah$, it will either: 
\item \label{ThmMain:Decrease} decrease in frequency at a geometric rate; or 
\item \label{ThmMain:NoIncrease} not increase in the case where it occurs in a haplotype whose block in the Frobenius normal form of $\M(\alpha, r)$ is an isolated block with all marginal fitnesses equal.
\eenu
\end{Theorem}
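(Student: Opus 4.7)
The plan is to recognize that Lemma \ref{Lemma:M*} puts the stability matrix in exactly the form addressed by the extended Karlin theorems. Setting $\beta = \alpha/\alphah$ and $\Lm = \McSolve$, we have $\M(\alpha, r)\D = [(1-\beta)\I + \beta \Lm]\D$, which is the family $\C(\beta)\D$ of Theorem \ref{Theorem:KarlinForML}. Lemma \ref{Lemma:M*} provides the two key facts: $\ev^\tp \McSolve = \ev^\tp$, and (via the equilibrium relation \eqref{eq:NewEquilibrium}) $\pi(\M(\alphah, r)\D) = 1$ at $\beta = 1$. The proof then reduces to establishing strict monotone decrease of $\pi(\M(\alpha, r)\D)$ in $\alpha$, after which the Perron root crosses $1$ transversally at $\alpha = \alphah$, giving the reduction principle.

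For the irreducible case, I would first observe that $\M(\alpha, r)$ and $\McSolve$ are both positive linear combinations of $\Q$, $\Sm$, and $\Smt$ (together with an $\I$ term that affects only diagonals), so they share the same off-diagonal zero pattern for all $\alpha, \alphah \in (0,1)$ and $r \in [0,1]$. Hence irreducibility of $\M(\alpha, r)$ transfers to $\McSolve$, and Lemma \ref{Lemma:M*} then yields $\pi(\McSolve \, \D) = 1$. The positive selection potential \eqref{eq:SelectionPotential} ensures $\D \neq d\,\I$, so Theorem \ref{Theorem:KarlinForML} applies directly and delivers strict monotone decrease of $\pi$ in $\beta$, hence in $\alpha$. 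Combined with the anchor at $\alpha = \alphah$, this gives $\rho(\M(\alpha, r)\D) > 1$ for $\alpha < \alphah$ (invasion at geometric rate) and $\rho(\M(\alpha, r)\D) < 1$ for $\alpha > \alphah$ (decay at geometric rate).

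For the reducible case, I would permute indices to place $\M(\alpha, r)$ in Frobenius normal form; because the off-diagonal zero pattern is shared, the same permutation block-triangularizes $\McSolve$. Each irreducible diagonal block $\M^{(k)}(\alpha, r)$ governs the asymptotic dynamics of \eqref{eq:ExternalStability} for a perturbation supported on that block together with any downstream blocks it flows into. An \emph{isolated} block (no off-diagonal coupling to any other block) yields a column-stochastic restriction; the restricted $\McSolve$ retains the restriction of $\D \zvh_b$ as a positive right Perron eigenvector at eigenvalue $1$, so Theorem \ref{Theorem:KarlinForML} applies provided the restricted $\D$ is not a multiple of $\I$, giving strict monotonicity and the invasion/decay dichotomy. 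If instead all marginal fitnesses within the block coincide, the restricted $\D$ equals $d\,\I$ and the restricted Perron root is identically $1$, which is precisely case \ref{ThmMain:NoIncrease}. A non-isolated block, by contrast, yields a strictly column-substochastic restriction (mass leaks to other blocks), so Corollary \ref{Corollary:KarlinForSubstochasticML} delivers strict monotonicity without any nondegeneracy assumption on $\D$. The hypothesis in case \ref{ThmMain:Increase} that the modifier occurs in at least one haplotype whose marginal fitness differs from the mean excludes the degenerate equal-fitness isolated subcase and ensures strict monotonicity on at least one block to which the perturbation couples.

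The principal obstacle will be the bookkeeping across Frobenius blocks: identifying which diagonal blocks carry Perron root exactly $1$ at $\alpha = \alphah$ (these must be the blocks supporting the equilibrium vector $\zvh_b$), which blocks are non-isolated and hence strictly subcritical at $\alphah$, and matching this classification with the initial support of the rare perturbation in order to determine the effective spectral radius seen by the introduced allele. Once this matching is pinned down, the strict monotonicity supplied by Theorem \ref{Theorem:KarlinForML} and Corollary \ref{Corollary:KarlinForSubstochasticML}, applied block by block, finishes each subcase of the statement.
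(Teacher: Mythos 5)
Your overall architecture is the paper's: represent $\M(\alpha,r)=(1-\beta)\I+\beta\McSolve$ with $\beta=\alpha/\alphah$ via Lemma \ref{Lemma:M*}, anchor at $\beta=1$ through the equilibrium relation \eqref{eq:NewEquilibrium} (more precisely, $\pi(\McSolve\D)=1$ follows from $\McSolve\D\zvh_b=\zvh_b$ and Lemma \ref{Lemma:ML}, not literally from $\pi(\M(\alphah,r)\D)=1$, but that is a harmless slip), and invoke Theorem \ref{Theorem:KarlinForML}. The irreducible case as you present it is correct, and your observation that $\M(\alpha,r)$ and $\McSolve$ share the same off-diagonal zero pattern is a clean way to transfer irreducibility.

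The genuine gap is in the reducible case: your classification of the Frobenius blocks is backwards. The matrices here are column-stochastic ($\ev^\tp\C(\beta)=\ev^\tp$) and the normal form is lower block triangular with isolation defined by \emph{rows}. Hence an isolated block can perfectly well have columns that feed lower, non-isolated blocks, making its diagonal restriction $\C_h(\beta)$ strictly substochastic; conversely, a non-isolated (receiving) block --- the last block in the chain, for instance --- has a column-stochastic restriction because nothing lies below it. This breaks your argument in both directions. First, your claim that non-isolated blocks are handled by Corollary \ref{Corollary:KarlinForSubstochasticML} fails for a receiving block that does not leak: if moreover $\D_h=d_h\I_h$ there, neither Theorem \ref{Theorem:KarlinForML} nor the Corollary applies and its Perron root is constant in $\beta$; what keeps such blocks below $1$ for $\alpha>\alphah$ is the strict subinvariance $\C_h(1)\D_h(\zvh_b)_h\leq(\zvh_b)_h$, $\neq(\zvh_b)_h$, forced by the incoming coupling (the paper's appeal to Gantmacher's Theorem 6 extended by Lemma \ref{Lemma:ML}), which you assert only as ``bookkeeping'' without a mechanism. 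Second, and more seriously, you lose the combination the paper actually relies on: a row-isolated block is anchored at $\pi(\C_h(1)\D_h)=1$ (the eigenvector equation restricts to it), and if its columns leak into non-isolated blocks it is \emph{strictly decreasing by the Corollary even when} $\D_h=d_h\I_h$. Your scheme would classify exactly such a block as constant (stochastic restriction plus scalar $\D_h$), so in a configuration where every isolated block has equal marginal fitnesses you could not conclude $\rho(\M(\alpha,r)\D)>1$ for $\alpha<\alphah$, and you would even misassign such blocks to conclusion (\ref{ThmMain:NoIncrease}). Repairing this requires redoing the block analysis with the correct orientation: isolated blocks carry the anchor $\pi=1$ and are either constant (stochastic with $\D_h=d_h\I_h$) or strictly decreasing (either $\D_h\neq d\,\I_h$ or substochastic from leakage), while non-isolated blocks are strictly subcritical at $\beta=1$ by subinvariance and stay subcritical for $\beta>1$.
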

%%%%%%%%%%%%%%%%%%%%%%%%%%%
\begin{proof}
For $\alpha=0$, the theorem reduces to Result 2 in \citet{Altenberg:and:Feldman:1987}.  Henceforth only $\alpha > 0$ is considered.  Lemma \ref{Lemma:M*} shows that $\McSolve$ is an ML-matrix for all $\alpha, \alphah \in (0, 1)$,  and $r \in [0,1]$.  We substitute $\C(\beta) = (1-\beta) \I + \beta \McSolve$ and $\beta=\alpha/\alphah$, thus $\C(\alpha/\alphah) = \M(\alpha,r)$.

{\bf Case 1: Irreducible $\M(\alpha,r)\D$.}
  
If $\M(\alpha,r)$ is irreducible and $\D$ positive on the diagonal, $\M(\alpha,r)\D$ is irreducible, and Theorem \ref{Theorem:KarlinForML} applies.  Hence for $\beta > 0$, $\db{} \pi( \C(\beta) \D )  < 0$.  Therefore, 
\[
\pi( \C(\alpha/\alphah) \D) > \pi( \C(1) \D ) = 1, \mbox{ for } \alpha < \alphah,
\]
and 
\[
\pi( \C(\alpha/\alphah) \D) < \pi( \C(1) \D ) = 1, \mbox{ for } \alpha > \alphah.
\]
Since $\M(\alpha,r)$ is always non-negative, the Perron root is in fact the spectral radius: 
\[
\pi( \C(\alpha/\alphah) \D) = \pi(\M(\alpha,r)\D) =  \rho(\M(\alpha,r)\D).
\]
For $\alphah = 1 > \alpha$, $\McSolve$ is unsolvable, but as a limit for $\alphah < 1$, $\rho(\M(\alpha,r)\D) \geq 1$ is assured.

{\bf Case 2: Reducible $\M(\alpha,r)\D$.}

If $\M(\alpha,r)$ is reducible or $\D$ is not strictly positive on the diagonal, $\M(\alpha,r)\D$ is reducible, and Theorem \ref{Theorem:KarlinForML} must be extended.  To analyze the dependence of the spectral abscissa on $\alpha$ in the case of reducible $\C(\beta)$ or where some $\D_{ii}=0$, we utilize the Frobenius normal form of $\C(\beta) \D$.   The Frobenius normal form permutes the indices of $\C(\beta) \D$ so that along the diagonal are irreducible square block matrices $\C_h(\beta) \D_h$, where $h$ indexes the diagonal blocks.  

The Frobenius normal form, $\A$, of a reducible matrix has the structure \citep[p. 75]{Gantmacher:1959vol2}:
\eb\label{eq:FrobeniusNormalForm}
\A = \left(
\begin{array}{cccc|cccc}
\A_1 & \0 & \cdots & \0 & & && \\
\0 & \A_2 & \ddots & \vdots & &  \mbox{\Large \bf 0} & &\\
\vdots &  & \ddots & \0 & &  & &\\
\0 & \cdots & \0& \A_t & & & &\\
\hline \A_{t+1,1} & \A_{t+1, 2} & \cdots & \A_{t+1,t} & \A_{t+1}  & \0 &\cdots& \0\\
\vdots & \vdots & \cdots & \cdots & \cdots & \ddots & \ddots & \0\\
\A_{t+s,1} & \A_{t+s, 2} & \cdots & \A_{t+s, t} & \A_{t+s, t+1} &\cdots & \cdots &\A_{t+s}
\end{array}
\right)
\ee
Blocks $\A_1, \ldots, \A_t$ are referred to as {\em isolated} blocks, while  $\A_{t+1}, \ldots, \A_{t+s}$ are referred to as {\em non-isolated} blocks.   Isolation is defined in terms of the rows, so in a matrix of the form
\[
\left(
\begin{array}{cc}
\A_1 & \0  \\
\A_3 & \A_2 
\end{array}\right)
\]
the block $\A_1$ is defined as isolated, whereas $\A_2$ is non-isolated if $\A_3 \neq \0$.  Thus, by definition, block $\A_1$ in Frobenius normal form is always isolated.

At this point we require the following lemma:
%%%%%%%%%%%
\begin{Lemma}
The eigenvalues of $\A$ are the eigenvalues of the irreducible diagonal block matrices in the Frobenius normal form of $\A$.
\end{Lemma}
\begin{proof}
Let $h \in \Ic$ refers to isolated blocks, and $h \in \Nc$ refers to non-isolated blocks.  The non-isolated blocks are contained in a principal submatrix, $\A_\Nc$, of $\A$.  In \eqref{eq:FrobeniusNormalForm} this would be
\eb\label{eq:ANc}
\A_\Nc = \begin{pmatrix}
\A_{t+1}  & \0 &\cdots& \0\\
 \cdots & \ddots & \ddots & \0\\
 \A_{t+s, t+1} &\cdots & \cdots &\A_{t+s}
\end{pmatrix}.
\ee

The eigenvalues of $\A$ must be eigenvalues either of an isolated block $\A_h$, $h \in \Ic$, or of the submatrix of non-isolated blocks, $\A_\Nc$.  This is readily seen, because, letting $\x$ be an eigenvector of $\A$, if $\A \x = \lambda \ \x$, then for isolated block $h$, $\A_h \x_h = \lambda \ \x_h$, where $\x_h$ refers to the subvector of $\x$ with indices in block $h$.  Hence either 
\benu
\item $\lambda$ is an eigenvalue of $\A_h$; or
\item $\x_h = \0$.   If $\x_h = \0$ for all $h \in \Ic$, then $\A_\Nc \x_\Nc = \lambda \ \x_\Nc$, hence $\lambda$ is an eigenvalue of $\A_\Nc$.
\eenu

In submatrix $\A_\Nc$, we see from \eqref{eq:ANc} that $\A_{t+1}$ is an isolated block.  Hence, repeating the argument above, the eigenvalues of $\A_\Nc$ must be eigenvalues of either $\A_{t+1}$ or of the principal submatrix containing blocks $\A_{t+2}$ through $\A_{t+x}$.  Continuing recursively in this manner, the lemma is proved.
\end{proof}
%%%%%%%%%%%

Thus the eigenvalues of  $\C(\beta)\D$ --- including the spectral abscissa --- must be eigenvalues of the diagonal blocks $\C_h(\beta)\D_h$.  Therefore, the spectral abscissa for $\C(\beta)\D$ is the maximum of the spectral abscissas:
\[
\pi( \C(\beta)\D ) = \max_{h} \pi( \C_h(\beta)\D_h ).
\]

The spectral abscissas of each block are either constant, or strictly decreasing:
	\benu
	\item Constant if $\D_h = d_h \; \I_h$ and $\ev_h^\tp \C_h(\beta) = \ev_h^\tp$;
	\item Strictly decreasing if either
		\benu
 		 \item $\D_h \neq d\  \I_h$ (by Theorem \ref{Theorem:KarlinForML}), or \\[-8 pt]
 		 \item $\C_h(\beta)$ is substochastic-ML, i.e. $\ev_h^\tp \C_h(\beta) \leq \ev_h^\tp, \neq  \ev_h^\tp$ (by Corollary \ref{Corollary:KarlinForSubstochasticML}).  
		\eenu
	\eenu
	
In the case where all blocks are isolated, it is not possible for all blocks to be constant, as this would entail that $\D = \I$, which is contrary to hypothesis \eqref{eq:SelectionPotential}.  In the case where there is at least one non-isolated block, then (referring to the indices in \eqref{eq:FrobeniusNormalForm}) there is at least one non-zero matrix $\C_{t+k, j}$, for some $t \in \{1, \ldots, s\}$, and $j \in \{1, \ldots, t\}$.  This entails that
\[
\ev_{t+k}^\tp  \C_{t+k, j}(\beta) + \ev_{j}^\tp \C_j(\beta) \leq \ev_{ j}^\tp, \neq  \ev_{ j}^\tp
\]
hence $ \ev_{j}^\tp \C_j(\beta) \leq \ev_{ j}^\tp, \neq \ev_{ j}^\tp$, thus $\C_j(\beta)$ is a substochastic-ML matrix.  Hence if $\Nc$ is nonempty, at least one isolated block will be an irreducible substochastic-ML matrix, which by Corollary \ref{Corollary:KarlinForSubstochasticML} is a decreasing block.  Therefore, in all cases, there will be at least one isolated block $h$ for which $\C_h(\beta) \D_h$ is strictly decreasing in $\beta$.

We have as a reference point \eqref{eq:NewEquilibrium}, where for $\beta = 1$, $\C(1)\D \zvh_b = \zvh_b > \0$.  Hence, by Lemma \ref{Lemma:ML} (\ref{ML:PositiveLeftRight}), $\pi(\C(1) \D) = 1$.  We know by Theorem 6 in \citep[pp. 77--78]{Gantmacher:1959vol2}, extended to ML-matrices by Lemma \ref{Lemma:ML} (\ref{ML:PositiveLeftRight}), that $\pi(\C_h(1) \D_h) = 1$ for all $h \in \Ic$, and $\pi(\C_h(1) \D_h) < 1$ for all $h \in \Nc$.  

For the parameter range $0 < \beta = \alpha/\alphah < 1$, i.e. $\alpha < \alphah$, at least one isolated block has spectral abscissa strictly decreasing in $\beta$, thus $\pi( \C_h(\beta)\D_h )  > 1$ for $\beta < 1$.  Hence,
\eb\label{eq:BetaLT1}
\pi( \C(\alpha / \alphah )\D ) = \rho(\M(\alpha, r) \D ) = \max_{h} \pi( \C_h(\alpha/\alphah)\D_h )  > 1,  \mbox{\ for \ } \alpha < \alphah.
\ee
As long as the new modifier allele $a$ is introduced into haplotypes that are part of the isolated block for which $\pi(\C_h(\alpha/\alphah) \D_h) > 1$, i.e $\epsi_{a_h} \neq \0$, then $\epsi_a$ will grow at a geometric rate.  This proves conclusion (\ref{ThmMain:Increase}) in the theorem.

For the parameter range $\beta = \alpha/\alphah > 1$, i.e. $\alpha > \alphah$, if there are no isolated blocks for which $\D_h = d_h \; \I_h$, then $\pi( \C_h(\beta)\D_h )  < 1$, for all $h \in \Ic$ when $\beta < 1$.  And recalling that $\pi(\C_h(1) \D_h) < 1$ for all $h \in \Nc$, then $\pi(\C_h(\beta) \D_h) < 1$ for all $h \in \Nc$ when $\beta > 1$ since all blocks are either constant or decreasing.  Therefore,
\[
\pi(\C (\alpha / \alphah ) \D) = \rho(\M(\alpha, r) \D ) = \max_{h} \pi( \C_h(\beta)\D_h )  < 1,  \mbox{\ for \ } \beta > 1,
\]
yielding conclusion (\ref{ThmMain:Decrease}) in the theorem.  If there is a constant isolated block, which requires $\D_h = d_h \; \I_h$, then $\pi(\C(\beta) \D) = \rho(\M(\alpha, r) \D ) = 1$ for $\beta > 1$.  If the new modifier allele is introduced associated with haplotypes within the constant block, then conclusion (\ref{ThmMain:NoIncrease}) in the theorem holds. %\qed
\end{proof}
%%%%%%%%%%%%%%%%%%%%
Conditions for $\M(\alpha,r)$ to be reducible are limited but important.  For $r=0$, $\Sm$ must be reducible.  This is the case in general when recombination is the only transformation acting on the selected haplotypes, and the Frobenius normal form of $\Sm$ consists of isolated irreducible blocks.  If no block has all marginal fitnesses the same, then conclusion (2) applies.  For $\M(\alpha,r)$ to be reducible when $r > 0$ and $\alpha < 1$, $\Q + \Sm + \Smt$ must be reducible.  This is precluded if all fitnesses $w_{ij}$ are positive, making $\Q > \0$.  For $\Q + \Sm + \Smt$ to be reducible requires either that some haplotypes be lethal, or that $\Sm + \Smt$ be reducible and that specific zeros in $\Sm + \Smt$ be matched by $w_{ij} = 0$, a highly non-generic selection regime. 

%%%%%%%%%%%%%%%%%%%%%%%%%%%%%
\section{Discussion}
 
\subsection{Mathematical Issues}
Theorem \ref{Theorem:Main} establishes the full generality of the reduction principle for modifier genes that produce linear variation in genetic transmission, assuring that no exceptions to the evolutionary reduction of genetic transformation rates can come about due to selection regime, allele or locus multiplicity, different genetic processes, or recombination with the modifier locus, for modifiers that produce linear variation in transmission.  

In previous studies, analytical tractability required tradeoffs between various simplifying assumptions:
  \citet{Liberman:and:Feldman:1986:MMR,Liberman:and:Feldman:1986:GRP,Feldman:and:Liberman:1986,Liberman:and:Feldman:1989} allowed no more than two alleles at each selected locus, or two demes in the case of migration modification, in order to obtain results for all $r > 0$.   \citet{Altenberg:1984,Altenberg:and:Feldman:1987} required $r=0$ or $\alpha=0$ in order to obtain results for arbitrary allele and locus number, genetic processes, and selection regimes.  These tradeoff are now removed.

In the process of proving Theorem \ref{Theorem:Main}, I have had to extend Karlin's Theorem 5.2 to ML-matrices, substochastic matrices, and reducible matrices, which may have application outside the current context of modifier gene theory.  Moreover, the treatment of reducible matrices here produces much stronger and more detailed results than in \citet{Altenberg:1984,Altenberg:and:Feldman:1987}, where the spectral radius $\rho(\M(\alpha, 0) \D$ was only proven to be non-increasing in $\beta$ for reducible $\M(\alpha,0)$.  Here, it is proven that $\rho(\M(\alpha, r) \D$ is strictly decreasing on $\alpha \in (0, \alphah)$, and strictly decreasing on $\alpha \in (\alphah, 1)$ except when the marginal fitnesses are all equal within an isolated block in the Frobenius normal form of $\M(\alpha, r) \D$, in which case $\rho(\M(\alpha, r) \D = 1$.

It should be noted that the method of proof utilized here does not allow us to say that $\da{}\rho(\M(\alpha, r) \D) < 0$ for $r > 0$.  That would be a much more direct means of proof of Theorem \ref{Theorem:Main} and would obviate the need to extend Karlin's theorem to ML-matrices.  A proof, however, runs up against impediments at step \eqref{eq:KarlinForMLKey} that have not been surmounted.

Such a result would be necessary to say, as was done in Result 2 in \citet{Altenberg:and:Feldman:1987}, that the asymptotic strength of selection either for or against a new modifier allele increases with $|\alpha - \alphah|$.  Theorem \ref{Theorem:Main} alone cannot not rule out the possibility that the function $\rho(\M(\alpha, r)$ meanders non-monotonically over $(0, \alphah)$ or $(\alphah, 1)$, only that it never crosses the $\rho=1$ line on those intervals.  This perverse possibility is not consistent, however, with the general finding that spectral functions are convex \citep{Friedland:1981}, and out of parsimony considerations, I would conjecture that for any $r > 0$, $\da{}\rho(\M(\alpha, r) \D) < 0$.  

\subsection{Linear Variation}
Linear variation is in a sense `impartial' in that all transitions --- from advantageous to deleterious haplotypes, and vice versa --- are scaled exactly the same.  So the production of both advantageous as well as deleterious haplotypes is reduced by a modifier that reduces transformation rates. 
But Theorem \ref{Theorem:Main} shows that a reducing modifier allele always generates its own linkage disequilibrium with advantageous haplotypes and thus invades the population, while a modifier allele that increases transformation rates generates linkage disequilibrium with disadvantageous haplotypes and thus goes extinct.  This is  the essence of the reduction principle.

Given this generality of the reduction principle for linear variation in transmission, we must ask why natural genetic systems in fact depart from the reduction principle and do not evolve to perfect transmission.  The  reduction principle guides us to the list of assumptions that must be violated in order for departures from reduction to occur.

Removal of the near-equilibrium assumption of the model allows several classes of departure from reduction.  Systems may be kept far from equilibrium by fluctuating selection, finite population size that produces drift, and {\it de novo} mutations.  Recombination has been found to evolve under a special form of cyclic selection \citep{Charlesworth:1976,Hamilton:1980}, and under newly changed directional selection \citep{Maynard:Smith:1988,
Charlesworth:1993,Barton:1995}, and in finite populations \citep{Felsenstein:1974,Felsenstein:and:Yokoyama:1976}.  Removal of the assumption of Mendelian segregation of the modifier locus, e.g. meiotic drive, allows the evolution of recombination \citep{Feldman:and:Otto:1991}.

Removal of the assumption of linear variation has produced many examples of departure from reduction.  Mechanistically, linear variation requires that the modifier controls the `hit' rate of a transforming process acting on the selected haplotypes, and that this process be the only transforming process, otherwise transmission probabilities between different genotypes will not be scaled equally.  Furthermore, for $\alpha$ to enter linearly in the recursion, the transformation process cannot allow multiple hits.

When spelled out thus, we see that the mechanistic requirements for linear variation are not realized  biologically.  Multiple hits are the norm in chiasma formation and in point mutation.  Multiple transformation processes are also the norm, as recombination, mutation, gene conversion, deletions, duplications, transpositions, etc. all happen in gamete formation.  

The case of multiple hits has been explored in a model of recombination modification for a multi-locus system \citep{Zhivotovsky:Feldman:and:Christiansen:1994}.  There, a refined reduction principle again holds.

The case of multiple simultaneous transformation processes acting during transmission has received a good deal of theoretical attention, and it is here that numerous departures from the reduction principle have been found.  A detailed review can be found in \citet{Feldman:Otto:and:Christiansen:1997}.  The cases exemplify the ``principal of partial control'' \citep[p. 149]{Altenberg:1984}: when a modifier gene has only partial control over the transformations occurring at selected loci, then it may be possible for this part of the transformations to evolve an increase.

The problem of characterizing exactly the conditions on variation in transmission that will evolve under modifier gene control remains an open question.  Its solution will require a deeper understanding of the differential properties of the spectra of non-negative matrices.

%%%%%%%%%%%%%%%%%%%%%%% 
\section{Acknowledgments}
I thank my doctoral advisor Marc Feldman for introducing me to modifier theory, and Jean Doble for organizing ``Feldmania'', which inspired me to revisit the subject.  I thank Shmuel Friedland for introducing me to his work and the literature on the convexity of the spectral radius, which provided results crucial to the proof.

%{\footnotesize \sf
%\input{/usr/local/LATEX/biblist.tex}
%}


\begin{thebibliography}{38}
\expandafter\ifx\csname natexlab\endcsname\relax\def\natexlab#1{#1}\fi
\expandafter\ifx\csname url\endcsname\relax
  \def\url#1{\texttt{#1}}\fi
\expandafter\ifx\csname urlprefix\endcsname\relax\def\urlprefix{URL }\fi

\bibitem[{Altenberg(1984)}]{Altenberg:1984}
Altenberg, L., 1984.
\newblock A Generalization of Theory on the Evolution of Modifier Genes.
\newblock Ph.D. thesis, Stanford University.
\newblock Searchable online and available from University Microfilms, Ann
  Arbor, MI.

\bibitem[{Altenberg and Feldman(1987)}]{Altenberg:and:Feldman:1987}
Altenberg, L. and Feldman, M.~W.
\newblock 1987.
\newblock Selection, generalized transmission, and the evolution of modifier
  genes. {I}. {T}he reduction principle.
\newblock Genetics 117:559--572.

\bibitem[{Balkau and Feldman(1973)}]{Balkau:and:Feldman:1973}
Balkau, B. and Feldman, M.~W.
\newblock 1973.
\newblock Selection for migration modification.
\newblock Genetics 74:171--174.

\bibitem[{Barton(1995)}]{Barton:1995}
Barton, N.~H.
\newblock 1995.
\newblock A general model for the evolution of recombination.
\newblock Genetical Research (Cambridge) 65:123Ð144.

\bibitem[{Cavalli-Sforza and
  Feldman(1973)}]{Cavalli-Sforza:and:Feldman:1973:MCI}
Cavalli-Sforza, L. and Feldman, M.~W.
\newblock 1973.
\newblock Models for cultural inheritance. {I}. {G}roup mean and within group
  variation.
\newblock Theoretical Population Biology 4:42--55.

\bibitem[{Charlesworth(1976)}]{Charlesworth:1976}
Charlesworth, B.
\newblock 1976.
\newblock Recombination modification in a fluctuating environment.
\newblock Genetics 83:181Ð195.

\bibitem[{Charlesworth(1993)}]{Charlesworth:1993}
Charlesworth, B.
\newblock 1993.
\newblock Directional selection and the evolution of sex and recombination.
\newblock Genetical Research (Cambridge) 61:205Ð224.

\bibitem[{Cohen(1981)}]{Cohen:1981}
Cohen, J.~E.
\newblock 1981.
\newblock Convexity of the dominant eigenvalue of an essentially nonnegative
  matrix.
\newblock Proceedings of the American Mathematical Society 81:657--658.

\bibitem[{Feldman et~al.(1980)Feldman, Christiansen, and
  Brooks}]{Feldman:Christiansen:and:Brooks:1980}
Feldman, M., Christiansen, F., and Brooks, L.~D.
\newblock 1980.
\newblock Evolution of recombination in a constant environment.
\newblock Proc Natl Acad Sci USA 77:4838--4841.

\bibitem[{Feldman and Liberman(1986)}]{Feldman:and:Liberman:1986}
Feldman, M. and Liberman, U.
\newblock 1986.
\newblock An evolutionary reduction principle for genetic modifiers.
\newblock Proc Natl Acad Sci USA 83:4824--4827.

\bibitem[{Feldman(1972)}]{Feldman:1972}
Feldman, M.~W.
\newblock 1972.
\newblock Selection for linkage modification: {I}. random mating populations.
\newblock Theoretical Population Biology 3:324--346.

\bibitem[{Feldman and Balkau(1973)}]{Feldman:and:Balkau:1973}
Feldman, M.~W. and Balkau, B.
\newblock 1973.
\newblock Selection for linkage modification {II}. a recombination balance for
  neutral modifiers.
\newblock Genetics 74:713--726.

\bibitem[{Feldman and Krakauer(1976)}]{Feldman:and:Krakauer:1976}
Feldman, M.~W. and Krakauer, J., 1976.
\newblock Genetic modification and modifier polymorphisms.
\newblock Pages 547--583 \emph{in} S.~Karlin and E.~Nevo, eds. Population
  Genetics and Ecology. Academic Press, New York.

\bibitem[{Feldman and Otto(1991)}]{Feldman:and:Otto:1991}
Feldman, M.~W. and Otto, S.~P.
\newblock 1991.
\newblock A comparative approach to the population genetic theory of
  segregation distortion.
\newblock American Naturalist 137:443Ð456.

\bibitem[{Feldman et~al.(1997)Feldman, Otto, and
  Christiansen}]{Feldman:Otto:and:Christiansen:1997}
Feldman, M.~W., Otto, S.~P., and Christiansen, F.~B.
\newblock 1997.
\newblock Population genetic perspectives on the evolution of recombination.
\newblock Annual Review of Genetics 20:261--295.

\bibitem[{Felsenstein(1974)}]{Felsenstein:1974}
Felsenstein, J.
\newblock 1974.
\newblock The evolutionary advantage of recombination.
\newblock Genetics 78:737--756.

\bibitem[{Felsenstein and Yokoyama(1976)}]{Felsenstein:and:Yokoyama:1976}
Felsenstein, J. and Yokoyama, S.
\newblock 1976.
\newblock The evolutionary advantage of recombination. {II. I}ndividual
  selection for recombination.
\newblock Genetics 83:845--859.

\bibitem[{Fisher(1922)}]{Fisher:1922}
Fisher, R.~A.
\newblock 1922.
\newblock On the dominance ratio.
\newblock Proceedings of the Royal Society of Edinburgh 42:321--341.

\bibitem[{Fisher(1930)}]{Fisher:1930}
Fisher, R.~A., 1930.
\newblock The Genetical Theory of Natural Selection.
\newblock Clarendon Press, Oxford.

\bibitem[{Friedland(1981)}]{Friedland:1981}
Friedland, S.
\newblock 1981.
\newblock Convex spectral functions.
\newblock Linear and Multilinear Algebra 9:299--316.

\bibitem[{Friedland and Karlin(1975)}]{Friedland:and:Karlin:1975}
Friedland, S. and Karlin, S.
\newblock 1975.
\newblock Some inequalities for the spectral radius of non-negative matrices
  and applications.
\newblock Duke Mathematical Journal 42:459--490.

\bibitem[{Gantmacher(1959)}]{Gantmacher:1959vol2}
Gantmacher, F.~R., 1959.
\newblock The Theory of Matrices, volume~2.
\newblock Chelsea Publishing Company, New York.

\bibitem[{Haldane(1924)}]{Haldane:1924I}
Haldane, J. B.~S.
\newblock 1924.
\newblock A mathematical theory of natural and artificial selection. {P}art
  {I}.
\newblock Transactions of the Cambridge Philosophical Society 23:19--41.

\bibitem[{Hamilton(1980)}]{Hamilton:1980}
Hamilton, W.~D.
\newblock 1980.
\newblock Sex versus non-sex versus parasite.
\newblock Oikos 35:282Ð290.

\bibitem[{Karlin(1982)}]{Karlin:1982}
Karlin, S., 1982.
\newblock Classification of selection-migration structures and conditions for a
  protected polymorphism.
\newblock Pages 61--204 \emph{in} M.~K. Hecht, B.~Wallace, and G.~T. Prance,
  eds. Evolutionary Biology, volume~14. Plenum Publishing Corporation.

\bibitem[{Karlin and McGregor(1972)}]{Karlin:and:McGregor:1972:PNAS}
Karlin, S. and McGregor, J.
\newblock 1972.
\newblock The evolutionary development of modifier genes.
\newblock Proc Natl Acad Sci USA 69:3611--3614.

\bibitem[{Karlin and McGregor(1974)}]{Karlin:and:McGregor:1974}
Karlin, S. and McGregor, J.
\newblock 1974.
\newblock Towards a theory of the evolution of modifier genes.
\newblock Theoretical Population Biology 5:59--103.

\bibitem[{Kimura(1956)}]{Kimura:1956}
Kimura, M.
\newblock 1956.
\newblock A model of a genetic system which leads to closer linkage by natural
  selection.
\newblock Evolution 10:278--287.

\bibitem[{Liberman and Feldman(1986a)}]{Liberman:and:Feldman:1986:MMR}
Liberman, U. and Feldman, M.~W.
\newblock 1986a.
\newblock Modifiers of mutation rate: {A} general reduction principle.
\newblock Theoretical Population Biology 30:125--142.

\bibitem[{Liberman and Feldman(1986b)}]{Liberman:and:Feldman:1986:GRP}
Liberman, U. and Feldman, M.~W.
\newblock 1986b.
\newblock A general reduction principle for genetic modifiers of recombination.
\newblock Theoretical Population Biology 30:341--371.

\bibitem[{Liberman and Feldman(1989)}]{Liberman:and:Feldman:1989}
Liberman, U. and Feldman, M.~W., 1989.
\newblock The reduction principle for genetic modifiers of the migration rate.
\newblock Pages 111--137 \emph{in} M.~W. Feldman, ed. Mathematical Evolutionary
  Theory. Princeton University Press.

\bibitem[{Maynard~Smith(1988)}]{Maynard:Smith:1988}
Maynard~Smith, J.
\newblock 1988.
\newblock Selection for recombination in a polygenic modelÑthe mechanism.
\newblock Genetical Research (Cambridge) 51:59Ð63.

\bibitem[{Nei(1967)}]{Nei:1967}
Nei, M.
\newblock 1967.
\newblock Modification of linkage intensity by natural selection.
\newblock Genetics 57:625--641.

\bibitem[{Odling-Smee(2007)}]{Odling-Smee:2007}
Odling-Smee, J.
\newblock 2007.
\newblock Niche inheritance: A possible basis for classifying multiple
  inheritance systems in evolution.
\newblock Biological Theory 2:276--289.

\bibitem[{Schauber et~al.(2007)Schauber, Goodwin, Jones, and
  Ostfeld}]{Schauber:Goodwin:Jones:and:Ostfeld:2007}
Schauber, E., Goodwin, B., Jones, C., and Ostfeld, R.
\newblock 2007.
\newblock Spatial selection and inheritance: applying evolutionary concepts to
  population dynamics in heterogeneous space.
\newblock Ecology 88:1112--1118.

\bibitem[{Seneta(1981)}]{Seneta:1981}
Seneta, E., 1981.
\newblock Non-negative Matrices and Markov Chains.
\newblock Springer, New York.

\bibitem[{Wright(1931)}]{Wright:1931}
Wright, S.
\newblock 1931.
\newblock Evolution in mendelian populations.
\newblock Genetics 16:97--159.

\bibitem[{Zhivotovsky et~al.(1994)Zhivotovsky, Feldman, and
  Christiansen}]{Zhivotovsky:Feldman:and:Christiansen:1994}
Zhivotovsky, L.~A., Feldman, M.~W., and Christiansen, F.~B.
\newblock 1994.
\newblock Evolution of recombination among multiple selected loci: {A}
  generalized reduction principle.
\newblock Proc Natl Acad Sci USA 91:1079--1083.

\end{thebibliography}
\end{document}